\documentclass[aps,pra,preprint,amsmath,amssymb,longbibliography]{revtex4-2}
\usepackage{graphicx}
\usepackage{bm,amsthm}
\usepackage{xcolor}
\usepackage{bbold}
\usepackage{natbib}
\usepackage{url}

\usepackage{tensor,pbox,hyperref}
\usepackage{braket}
\usepackage[qm]{qcircuit} 

\newcommand{\qc}[2][@C=.8em @R=.7em]{\Qcircuit #1 {#2}}

\newcommand{\rvline}{\hspace*{-\arraycolsep}\vline\hspace*{-\arraycolsep}}

\newcommand{\kett}[1]{\ket{#1}\rangle}

\newcommand{\bea}{\begin{eqnarray}}
\newcommand{\ea}{\end{eqnarray}}
\newcommand{\eea}{\end{eqnarray}}
\newcommand{\ord}{\mathcal{O}}
\newcommand{\Nb}{{N_\text{bit}}}

\newcommand{\nb}{{n_\text{bit}}}
\newcommand{\Nl}{{N_\text{smpl}}}

\newcommand{\rI}{{\text{I}}}
\newcommand{\rII}{{\text{II}}}

\newcommand{\defas}[0]{\mathrel{\mathop:}=} 
\newcommand{\asdef}[0]{=\mathrel{\mathop:}} 

\usepackage{accents}
\newcommand{\dbtilde}[1]{\accentset{\approx}{#1}}
\DeclareRobustCommand{\ptil}{\dbtilde{p}} 

\newtheorem{theorem}{Theorem}[section]
\newtheorem{definition}[theorem]{Definition}

\newtheorem{lemma}[theorem]{Lemma}

\begin{document}
\newcommand{\ri}{ i}
\newcommand{\re}{ e}
\newcommand{\bx}{{\bm x}}
\newcommand{\bd}{{\bm d}}
\newcommand{\be}{{\bm e}}
\newcommand{\br}{{\bm r}}
\newcommand{\bk}{{\bm k}}
\newcommand{\bA}{{\bm A}}
\newcommand{\bD}{{\bm D}}
\newcommand{\bE}{{\bm E}}
\newcommand{\bB}{{\bm B}}
\newcommand{\bI}{{\bm I}}
\newcommand{\bH}{{\bm H}}
\newcommand{\bL}{{\bm L}}
\newcommand{\bR}{{\bm R}}
\newcommand{\bZero}{{\bm 0}}
\newcommand{\bM}{{\bm M}}
\newcommand{\bX}{{\bm X}}
\newcommand{\bn}{{\bm n}}
\newcommand{\bs}{{\bm s}}
\newcommand{\bv}{{\bm v}}
\newcommand{\tbs}{\tilde{\bm s}}
\newcommand{\tV}{\tilde{V}}
\newcommand{\rSi}{{\rm Si}}
\newcommand{\beps}{\mbox{\boldmath{$\epsilon$}}}
\newcommand{\bGamma}{\mbox{\boldmath{$\Gamma$}}}
\newcommand{\bxi}{\mbox{\boldmath{$\xi$}}}
\newcommand{\rg}{{\rm g}}
\newcommand{\tr}{{\rm tr}}
\newcommand{\xmax}{x_{\rm max}}
\newcommand{\xb}{\overline{x}}
\newcommand{\pb}{\overline{p}}
\newcommand{\ra}{{\rm a}}
\newcommand{\rx}{{\rm x}}
\newcommand{\rs}{{\rm s}}
\newcommand{\rP}{{\rm P}}
\newcommand{\up}{\uparrow}
\newcommand{\down}{\downarrow}
\newcommand{\hc}{H_{\rm cond}}
\newcommand{\kb}{k_{\rm B}}
\newcommand{\cI}{\mathcal{I}}
\newcommand{\tit}{\tilde{t}}
\newcommand{\cE}{\mathcal{E}}
\newcommand{\cC}{\mathcal{C}}
\newcommand{\Ubs}{U_{\rm BS}}
\newcommand{\sech}{{\rm sech}}
\newcommand{\xs}{{x_1,\ldots,x_N}}
\newcommand{\qq}{{\bf ???}}
\newcommand*{\etal}{\textit{et al.}}
\newcommand{\empt}{\mbox{ }}
\newcommand{\tinI}{{\text{\tiny 1}}}
\newcommand{\tinDrei}{{\text{\tiny 3}}}
\newcommand{\dbla}{{\text{\tiny 2,3}}}
\newcommand{\dbll}{{\text{\tiny 1,2}}}
\newcommand{\dble}{{\tiny\mbox{1,2}}}
\newcommand{\trpl}{{\tiny\mbox{1,2,3}}}
\newcommand{\lra}{\leftrightarrow}
\newcommand{\rf}{{\text{rf}}}
\newcommand{\ttil}{\tilde{t}}
\newcommand{\Vtil}{\tilde{V}}
\newcommand{\Stil}{\tilde{S}}
\newcommand{\sign}{\text{sign}}
\newcommand{\abs}{\text{abs}} 
\def\vec#1{\bm{#1}}
\def\ket#1{|#1\rangle}
\def\bra#1{\langle#1|}
\def\keps{\bm{k}\boldsymbol{\varepsilon}}
\def\dm{\boldsymbol{\wp}}


\title{The Twin-World road to reality in quantum mechanics}
\author{Daniel Braun}
\affiliation{Eberhard-Karls-Universit\"at T\"ubingen, Institut f\"ur Theoretische Physik, 72076 T\"ubingen, Germany}

\centerline{\today}
\begin{abstract}
I introduce a novel realistic, stochastic approach to quantum
mechanics by extending the recently proposed grabit formalism
\cite{braun_stochastic_2022} to two Twin Worlds.  According to the
picture developed, we live at the intersection of two worlds with
identical stochastic laws of evolution.  Our World is limited to that
intersection, and only coincidence events
from the two Twin Worlds, post-selected automatically by our
restriction to the intersection, have physical reality in Our World.  This
fully reproduces standard non-relativistic quantum mechanics,
including Born's rule and the violation of Bell's inequality. I derive
the stochastic evolution equation in each Twin World that fully
reproduces Schr\"odinger's equation for an arbitrary number of
particles with arbitrary interactions, and
demonstrate that hallmark quantum effects such as tunneling are
correctly reproduced.
\end{abstract}
\maketitle

\section{Introduction}
Almost 100 years after the formalization of quantum mechanics, there
is still no agreement on what the quantum mechanical wave function
actually represents.  Many different interpretations have been put
forward, starting from Schr\"odinger's own original ideas of some
smeared-out matter distribution, over the standard ``Copenhagen''
interpretation, the Bohm-de Broglie pilot wave, qbism, and many more
\cite{tumulka_foundations_nodate}.  All interpretations appear to agree, however, on
a statistical interpretation according to which the absolute value
squared of the wave function $|\psi(x)|^2$ represents the probability (or
probability density in the continuum case) to find a certain outcome $x$
if one performs a measurement of the observable $x$. This so-called
``Born-rule'' is a corner stone of the statistical
interpretation. Gleason's theorem and its generalizations show that
this the only choice for von Neumann measurements (with observables
represented by hermitian matrices) \cite{gleason_measures_nodate} or even POVM (=positive operator-valued
measure) measurements \cite{busch_quantum_2003}. 
Over the years, it has been attempted several times to construct
theories of the quantum world that follow classical statistical
mechanics, where variables have actual values that are only revealed
by the measurement process, and the quantum probabilities only reflect
our ignorance of those actual values that are distributed
according to a probability distribution.  That physical theories
should make statements about variables that actually do have a
well-defined value, whether observed or not, was a strong conviction
of Einstein \cite{Einstein35}.  That a theory turns out to be statistical rather than
deterministic was more acceptable for him, in spite of his often-cited quote
that God does not play dice \cite{Schilpp1959-SCHAEP-19,BornBio}.\\
However, with the advent of Bell's inequality and its various
generalizations that result from assuming a local realistic theory
with hidden variables as
well as the experimental verification of their violation, the
widespread agreement has become that Einstein was wrong and the case
has been settled. The standard paradigm is that the wave-function is all
we can possibly know about a quantum system.  The advent of quantum
information theory has strengthened that position even further, but
the discussions about the correct interpretation of the quantum
wavefunction, the ``measurement problem'', and of what happens during the collapse of the wave
function, continue.  ``Collapse models'' \`a la Ghirardi-Rimini-Weber
\cite{GhirardiRiminiWeber1986} or Diosi-Penrose
\cite{diosi_minimum_1989}, more and more constrained by
 experiments \cite{bassi_collapse_2023}, introduce
modifications of the Schr\"odinger equation that lead to a rapid
localization of the wave-function for macroscopic objects but retain
quantum superpositions of microscopic ones.   Bohmian
mechanics \cite{Bohm52,teufel_bohmian_2009,tumulka_foundations_nodate} proposes
particle positions guided by the wave function as realistic variables.
 The trajectories depend on the gauge choice for the wave function,
 and the latter makes the theory non-local. \\ 

Motivated originally by questions of simulatability of quantum
circuits, in \cite{braun_stochastic_2022} we asked the question
whether there are quantities other than the quantum mechanical wave
function that share the remarkable properties that  are exploited
artfully in quantum algorithms.  These are the possibility
of interference in an exponentially large Hilbert space, combined with
the possibility of local manipulation of the state by acting on only
one or two qubits at the same time, and physical reality of the
quantity.  Surprisingly, we found that  
derivatives of a probability distribution share these properties.
More precisely, one can define a ``grabit'' state (for ``gradient
bit'') as in \eqref{eq:psi} below, with one partial derivative for
each particle (or qubit).  We then showed that one can find a set of
universal stochastic matrices by replacing each universal quantum gate
with a corresponding specific stochastic matrix. The concatenation of
these stochastic matrices leads to an automated 
translation of any quantum circuit to a classical stochastic process
such that the corresponding grabit state follows, up to a prefactor,
exactly the quantum mechanical evolution of the realified quantum
state (i.e.~the original complex quantum state split into real- and
imaginary parts).

There were two caveats of the approach, however: i.)  the prefactor decays
typically exponentially with the number of interference-generating
gates; and ii.)
contrary to standard quantum mechanics the probabilities for the
final outcomes are given by the absolute values of the amplitudes of
the grabit state, not the amplitude squared, a rule called ``Born-1''
rule in \cite{braun_stochastic_2022}. The first caveat could be
remedied to large extent by ``refreshment gates'', to be
implemented after each interference-generating gate. They are
non-linear gates based on an estimate of the quantum state (estimated ideally
from the probability distribution, in emulations from a
histogram of possible outcomes) that make the grabit-state
interference-free.  In the limit of an infinitely large number of
samples, the evolution of the 
grabit state  follows exactly the true quantum mechanical,
realified state, without decaying. In actual emulations of quantum
algorithms that produce a small number of output peaks compared to the
exponentially many possible outputs
(such as Shor's factoring algorithm), it
turned out, however, that the number of samples required for
sufficiently precise estimates for the refreshment gates to work will
typically be exponentially large. This limits the practical
usefulness of the stochastic emulation for such algorithms, even
though for other quantum algorithms, such as quantum approximate optimization algorithm (QAOA) at
level $p=1$, it turned out to have
sampling complexity comparable to an ideal quantum computer \cite{braun_stochastic_2022}.\\

In the present work I show that quantum mechanics can be recovered
exactly by identifying the quantum mechanical wave function with  the
derivative of a probability distribution. This naturally leads to a
Twin World interpretation of the quantum world, where all that happens
in Our World are post-selected coincidence events from the two Twin
Worlds.  The Twin World picture offers a physical explanation of the origin of the standard Born
rule (called ``Born-2'' rule in \cite{braun_stochastic_2022} and in
the following for disambiguation), and gives rise to a realistic, stochastic theory of the quantum world. 
In the following I briefly review the grabit formalism, then show that
enforcement of the Born-2 rule leads naturally to a Twin World
interpretation of quantum reality and the exact same violation of
the CHSH inequality as in standard quantum mechanics (QM).  I
then explore dynamics and construct a 
stochastic evolution equation that correctly reproduces the
Schr\"odinger equation of a particle on a 1D line, including hallmark 
effects of quantum mechanics such as the tunneling effect, before
extending the theory to any dimension and the many-particle case. 

The
formalism developed allows a microscopic realistic picture of the
quantum world, with quantum mechanics indeed just a statistical theory
on top, albeit a less trivial one than what is usually assumed for
coding a local realistic theory that gives rise to Bell's
inequality and its variants. Rather than excluding local realistic
theories \emph{per se}, it becomes clear that Bell's inequality is built on a
{\em specific probabilistic model} of locality and realism, such that
its experimentally established violation invalidates that specific
model, but not necessarily {\em any} local realistic theory.  The
model proposed here is explicitly realistic, and since it reproduces
perfectly the standard quantum statistics it can, of course, not be
local in the sense of the usual specific probabilistic model going
back to Bell. But it is local in the same sense in which quantum
mechanics peacefully co-exists with special relativity, i.e.~that, as
far as we know, 
it does not admit superluminal information transfer between
measurement settings  on Alice's side to probability distributions 
of measurement results on Bob's side.  Since we now have 
realistic variables such as particle positions rather than only the
wave function to describe physical reality, this is the only locality
that really counts. The measurement problem is reduced to its
classical counterpart.

\section{The grabit formalism}\label{sec.grafor}
In standard QM, all we can know about the system is contained in a
state vector $\Psi$ in a Hilbert space over complex numbers.  Its
projections onto basis states are interpreted as probability
amplitudes to find a corresponding outcome of an experiment.  E.g., if
we have a system of $N$ particles in three dimensions, whose positions
are denoted $\bm x_1,\ldots,\bm x_N$ with $\bm x_k\in \mathbb{R}^3$, the wave
function $\Psi(\bm x_1,\ldots,\bm x_N)\in \mathbb{C}$ has, according
to the Born-2 rule, the physical meaning that $|\Psi(\bm x_1,\ldots,\bm
x_N)|^2$ denotes the probability to jointly find the particles  in infinitesimal volume elements at
positions $\bm x_i$ for $i\in\{1,\ldots,N\}$. Exactly the same information is, of course,
contained in a ``realified'' version, $\Phi(\bm x_1,\ldots,\bm
x_N)\equiv (\Re \Psi(\bm x_1,\ldots,\bm x_N),\Im \Psi(\bm
x_1,\ldots,\bm x_N))$. Also the propagation of the state via a unitary
matrix can be easily split into real and imaginary parts. 
In more detail, for any system with finite dimensional Hilbert space, any evolution with a complex unitary matrix $U$ with matrix elements $U_{ij}$ in an orthonormal basis that propagates a complex quantum state $\Psi$,  $\Psi'_i=U_{ij}\Psi_j$ (with summation convention
over pairs of identical indices), can be rewritten equivalently as  
${\Phi}'_i=\tilde{U}_{ij}{\Phi}_j$ where each 
complex matrix element $U_{ij}$ is replaced by a real 2 $\times$ 2
matrix \cite{aharonov_simple_2003},
\begin{equation}
  \label{eq:Util}
  U_{ij}\mapsto
  (\tilde{U}_{i\nu,j\mu})_{\nu,\mu\in\{0,1\}}=\begin{pmatrix}
    \Re U_{ij}&-\Im U_{ij}\\
    \Im U_{ij}&\Re U_{ij}
    \end{pmatrix}\,.
  \end{equation}

  For a system of $\nb$ qubits, real and imaginary part of a
  multi-qubit state are distinguished by a single extra (real) qubit, called
  the ReIm qubit.  $\nb$ complex qubits are completely equivalent to
  $\Nb=\nb+1$ real qubits.\\ 
\begin{table}
  \centering
\begin{tabular}{p{2cm}||p{2cm}|p{2cm}|p{2cm}|p{2cm}|p{3.5cm}|p{3.5cm}|}
    &complex state& realified state & b4v or ppv probability & relative
                                                        frequency &
                                                                    physical
                                                                    probability
 in each Twin World& coincidence probability in Our World  \\\hline
  QM&$\Psi$& $\Phi$ & & & &  \\\hline
   Twin World&$\psi$& $\phi$ & $P$ &$R$ &$\tilde{p}^{\rI,\rII}$& $\ptil$ \\\hline
  \end{tabular}\label{tab0}
  \caption{Overview of symbols. Here, b4v=''byte-four value'',
    $I\equiv (i,\sigma)$ where $i,\sigma\in\{0,1\}$, and
    ppv=''phased position variable'' $(\rho,\sigma, \bm x)$ where
    $\rho,\sigma\in\{0,1\}$ and $x\in \mathbb{Z}_N$ for a single particle
    on a 1D line with $N$ sites (see section \ref{sec.dynami}).  } 
\end{table}

The central idea of the grabit approach is to represent the realified
$\Phi(\bm x_1,\ldots,\bm x_N)$ as ($N$-th) derivative of a certain
probability distribution $P(\bm x_1,\ldots,\bm x_N)$.  In the simplest
case, where the particles are considered in 1D, one could, e.g.,
define 
\begin{equation}
  \label{eq:psi}
\phi_\rho(x_1,\ldots,x_N)=\partial_{x_1}\ldots \partial_{x_N}P_\rho(x_1,\ldots,x_N) \,,
\end{equation}
where $\rho=0,1$ labels the real- and imaginary part of $\psi$, respectively.
Owing to the fact that this object can be positive or negative, it is
capable of full interference in a tensor-product Hilbert space, just
as the quantum mechanical wave function, and has, at the same time, a
very clear and operational physical meaning.   For
discrete variables, the derivatives can be replaced by differences,
which leads to the representation of each realified quantum mechanical qubit by
a ``grabit'' (short for ``gradient bit''), consisting of two classical
stochastic bits and an additional bit that codes the information what is the real and what is the imaginary part.   The two classical bits code four ``byte-four
values'' (b4vs) $I\equiv (i,\sigma)$ where $i\in\{0,1\}$ codes the
``byte-logical value'', and $\sigma\in\{0,1\}$ the ``gradient value''
that determines the sign of the difference of the probabilities.

There is some freedom in mapping complex states and
unitaries to real ones, and additional freedom in going from realified
quantum states to their grabit representation.  Adding one global real
qubit with pure orthogonal basis states $\ket{\rho}$ as outlined above is one possibility.
Alternatively, one may realify the local computational 
  bases, in which case one has to add a real ReIm bit with state
  labels $\rho_\nu$ for each
  subsystem.
The two labels $\sigma_\nu,\rho_\nu$ then determine locally
  the four possible phase factors $1,i,-1,-i$ for $\sigma_\nu,\rho_\nu\in
  \{00,01,10,11\}$, respectively. 
  Combining these factors with 
  probabilities in $[0,1]$ one can create any phase in $[0,2\pi[$ for
  each subsystem.  Explicitly, the local phase factors are
  $(-1)^{\sigma_\nu}(i)^{\rho_\nu}$. This coding requires three classical
  stochastic bits  for each complex quantum bit, and corresponds to
  the situation in standard quantum mechanics, where 
phase factors from subsystems combine to a global one,
  $(-1)^{\sigma}(i)^{\rho}$, with
  $\sigma=\sum_{\nu=1}^{\nb}\sigma_\nu$ and
  $\rho=\sum_{\nu=1}^{\nb}\rho_\nu$.  Explicitly, the complex
  amplitude $\psi_{\rho,\bm i}$ is given by
  \begin{equation}
    \label{eq:psirhoi}
    \psi_{\rho,\bm i}=(i)^\rho\sum_{\bm \rho|\rho=\sum_{\nu=1}^\nb
      \rho_\nu}\sum_{\bm
      \sigma}(-1)^{\sum_{\nu=1}^\nb\sigma_\nu}P_{\bm \rho,2\bm i+\bm \sigma}\,.
  \end{equation}
 Local stochastic operations can then be performed in parallel. 
  The alternative, with one
global ReIm grabit, used in \cite{braun_stochastic_2022} and outlined
at the beginning of this Section, leads to a representation of an
$\Nb$-grabit state with $\Nb=\nb+1$ given by 
 \begin{equation}
  \label{eq:psiN}
  \phi_{\bm i}=\sum_{\bm \sigma
    \in\{0,1\}^{\otimes \Nb}}(-1)^{\sum_{i=1}^\Nb\sigma_i}\,P_{2\bm i+\bm \sigma}\,.
\end{equation}
It is a vector in $\mathbb{R}^{2^\Nb}$, with elements labeled by the
index $\bm i=(i_1,\ldots,i_\Nb)$, where the ReIm bit was absorbed into
$\bm i$, e.g.~as highest or lowest significant bit.  
As was shown in \cite{braun_stochastic_2022}, any pure quantum
mechanical state $\Psi\in\mathbb{C}^{2^\nb}$ can be represented this way, up to
a real prefactor $a\ne 0$.  So here we need $2(\nb+1)$ classical
stochastic bits.  In fact,  even $2\nb+1$ many are enough, as for the
ReIm information a classical bit $\rho$ suffices. This version reduces
the memory requirement but prevents executing several gates that change the
global ReIm in parallel. Loops over random realizations can still be
parallelized, though.

Even more memory-efficient would be to have just a global sign bit as
well, 
  \begin{eqnarray}
    \label{eq:psirhoiG}
    \psi_{\rho,\bm i}&=&(i)^\rho\sum_{      \sigma}(-1)^\sigma P_{    \rho,\sigma,\bm i}\\
        \Leftrightarrow \phi_{\rho,\bm i}&=&\sum_{      \sigma}(-1)^\sigma P_{ \rho,\sigma,\bm i}\,.
  \end{eqnarray}
This removes the possibility of local manipulation of phases and hence
requires all gates, even real single-qubit gates to interact with the
global sign bit, but since
phases of any component in a superposition of states of a multi-particle system are only
defined up to a global phase factor, this still allows to
represent any pure quantum state,
\begin{equation}
  \label{eq:psiSimpl}
  \ket{\psi}=\sum_{\rho,\bm i}\psi_{\rho,\bm i}\ket{\bm i}=\sum_{\bm
    i}(\phi_{0,\bm i}+i \phi_{1,\bm i})\ket{\bm i}\,,
\end{equation}
and with a cost of $\nb+2$ classical stochastic bits, this seems to be
the most economical representation of a complex quantum state of $\nb$
qubits. \\

In the following, in order to stay close to standard
  quantum mechanics and also allow local interference, independent of
  the other subsystems, we largely stick to the convention of
  \cite{braun_stochastic_2022} with a local gradient value $\sigma_i$ for
  each subsystem, but a global ReIm bit $\rho$. Any unitary transformation on $\mathbb{C}^{2^\nb}$ can then be represented by a stochastic map that propagates the probability distribution  
$P_I$ (where $I\equiv (2\bm i+\bm\sigma)=0,\ldots, 4^\Nb-1$), such that
the final quantum mechanical state can be extracted from the final
probability distribution according to \eqref{eq:psiN}, again up to a
prefactor.  Indeed, each single-qubit or two-qubit unitary
transformation from a universal gate set can be translated to a
corresponding stochastic map acting on at most three grabits, such that there is an automated
translation of any quantum circuit to a corresponding grabit quantum
circuit, i.e.~a classical stochastic map, that propagates the state $\psi$ coded via \eqref{eq:psiN} up
to a prefactor exactly as the quantum circuit (see Theorem 3.3 in
\cite{braun_stochastic_2022} for a precise formulation).

\section{The Twin-World interpretation of quantum mechanics}\label{sec.TwiWor}
As mentioned, the
prefactor of the state $\phi$ decays typically exponentially with the number of
interference-generating gates.  These are gates such as the Hadamard
gate that create quantum-mechanical superpositions in the
computational basis. The problem is, generally speaking,
that destructive interference that manifests itself in a reduction of
a component of $\phi_{\bm i}$, arises from a difference of at least
two probabilities over b4vs. E.g.~after implementing a squared
Hadamard gate acting on $\ket{0}$, QM gives $H^2\ket{0}=\ket{0}$. With
a grabit emulation, and skipping momentarily the ReIm grabit as the
states considered here are real, after the two Hadamard gates the grabit has the
probability distribution $P=(2,0,1,1)/4$ over b4vs and leads to
$\phi=(1,0)/2$. While this is the correct state up to the prefactor
$1/2$, the physical probability distribution $\tilde{p}_i$
marginalized over the gradient variable $\sigma$, gives
$\tilde{p}_i\equiv \sum_{\sigma=0,1}P_{2i+\sigma}=1/2$, i.e.~there is no contrast $\tilde{p}_0-\tilde{p}_1$ at
all. And with every further iteration of $H^2$, the prefactor of 
$\phi$ decreases by another factor 1/2 \cite{braun_stochastic_2022}. This has the practical
consequence that the number of samples that effectively contribute to
$\phi$ decreases exponentially with the number of
interference-generating gates.\\

Both of these problems can be
remedied at least partly by ``refreshments''.   These are non-linear
gates that re-enhance the contrast of the grabit states.  They are
defined formally in Section \ref{sec.dynami}, see eq.\eqref{eq:P''},
but in a  
nutshell and the present context of emulation of quantum algorithms,
they work as follows: After each interference-generating  
gate, estimate the state $\phi$ from the available samples,
i.e.~calculate an estimator $\hat\phi$ from the histogram $R_{2\bm
  i+\bm \sigma}$ of b4vs, in complete  
analogy of \eqref{eq:psi}, 
 \begin{equation}
  \label{eq:hatpsiN}
  \hat{\phi}_{\bm i}=\sum_{\bm \sigma
    \in\{0,1\}^{\otimes \Nb}}(-1)^{\sum_{i=1}^\Nb\sigma_i}\,R_{2\bm i+\bm \sigma}\,.
\end{equation}
Then redistribute the samples such that firstly for each $\bm i$ only a single
$\bm \sigma$, corresponding to the sign of $\hat\phi_{\bm i}$, contributes, and
secondly that for the state $\phi'$ after the
refreshment, $\phi_{\bm i}'\simeq a\, \hat\phi_{\bm i}$ with some
proportionality constant $a\geq 1$, as closely as possible.   In 
the limit of infinitely many samples and a finite Hilbert space, the agreement
can be made arbitrarily precise.  One then shows (see theorem 4.1
in \cite{braun_stochastic_2022}) that including the
refreshments allows one to end up with a physical probability
distribution given by the true realified quantum state, up to different
normalization,
\begin{equation}
  \label{eq:born1}
\tilde{p}_{\bm i}=|{\Phi}_{\bm i}|/||\, {\Phi}\, ||_1\,.    
\end{equation}
In practice, the number of realizations needed might still grow
exponentially for a state spreading over exponentially many basis states, but
for shallow quantum circuits the resulting classical stochastic simulation can
be of comparable efficiency as the actual quantum algorithm run on a quantum
computer, once the sampling complexity needed also for the latter is
taken into account.  

The second remaining caveat is that \eqref{eq:psiN} corresponds to
the ``Born-1 rule'', i.e.~the
physical probabilities are given by the absolute value of the grabit state
amplitudes rather than the squared absolute value as in the standard Born
rule.  This is of no
major concern for a quantum algorithm that ends in a highly peaked
state, where the peak position codes the result of the calculation.
Even if there is a relatively sparse set of final peak positions this is
not a problem, as  peaks remain peaks if one takes the square
root of the exact quantum mechanical distribution.  Hence, the
stochastic emulation will also end up with high probability in the same final bins as the
quantum algorithm, even if the probabilities themselves can be different from the quantum mechanical ones.\\ 

However, if one wants to entertain
the thought that maybe the realified quantum mechanical wavefunction
in Nature really is --- or at least 
can be consistently described as --- the derivative 
of a probability distribution, clearly one needs to go beyond the
Born-1 rule.  Interestingly now, if we tentatively accept \eqref{eq:psi} as origin
of a quantum state 
and the mechanism of refreshments (leading ultimately to
\eqref{eq:born1}), insisting on the validity of the Born-2 rule
largely {\em dictates} the physical reality of a {\em Twin World}:
There has to exist an exact copy of a first world where events
take place according to the same laws, but {\em only post-selected coincident events
from the two Twin Worlds have physical reality in Our
World}. Indeed,
let ${\ptil}_{\bm i}$ denote the physical probability to find event
$\bm i$ in Our World (e.g.~the event to find a particle at
a position $\bm x$ indexed by $\bm i$ in Our World). We have
\begin{eqnarray}
  \ptil_{\bm i}&=&|\Psi_{\bm i}|^2\label{eq:born2.5}\\
&=&  \sum_\rho\Phi_{\bm i,\rho}^2\label{eq:born2.4}\\                 &=& \sum_\rho\Phi_{\bm i,\rho}^2/\sum_{\bm i,\rho}
                             \Phi_{\bm i,\rho}^2\label{eq:born2.3}\\
&=& \sum_\rho|\phi_{\bm i,\rho}|^2/\sum_{\bm i,\rho} |\phi_{\bm i,\rho}|^2  \label{eq:born2.2}\\
&=&\sum_{\rho}\tilde{p}_{\bm i,\rho}\cdot
                             \tilde{p}_{\bm i,\rho}/\sum_{\bm
                             i,\rho}\tilde{p}_{\bm i,\rho}^2\label{eq:born2.1}
\end{eqnarray}
Eq.\eqref{eq:born2.5} is the usual Born-2 rule.  In \eqref{eq:born2.4} we have split into real and imaginary part. In \eqref{eq:born2.3} we used the normalization to $||\Phi||_2=1$ for the true quantum
mechanical state. To get to \eqref{eq:born2.2}, we have made use of
the proportionality of the grabit state to the true, realified quantum mechanical
wave-function in both Twin Worlds, eq.\eqref{eq:born1}.   The proportionality factor drops out here, so that the
different normalization does not matter. Eq.\eqref{eq:born2.1} follows since after a refreshment, the
probabilities are interference-free and given by the Born-1 rule, i.e. $\tilde{p}_{\bm
  i,\rho}=|\phi_{\bm i,\rho}|$.
Equation \eqref{eq:born2.1} has the direct interpretation of a 
probability of post-selected coincidence events $\bm i$ from the two Twin Worlds, marginalized
over the global ReIm index $\rho$.

Hence, we see that the Born-2 rule 
can be interpreted as the coincidence probability that two
independent, identical statistical ensembles give rise to the same outcome $\bm
i$. These two idependent, identical statistical ensembles present two
{\em Twin Worlds}:  Exactly the same stochastic dynamics happens in both of
them, and they do not interact.  All that happens in ``Our World''
consists then of post-selected coincidence events in the two Twin
Worlds.  By definition we don't see any other 
events, and this leads to an automatic post-selection of the
coincidence events.

To summarize, there are two worlds because of the Born-2 rule.  They
are Twin Worlds, i.e.~evolve under the exact same statistical laws, because the absolute value
square in the Born-2 rule consists of two identical factors.  The fact
that we have a product of probabilities means that the two Twin Worlds
are statistically independent, which excludes non-trivial interactions
between them that would correlate them. Maintaining normalization of the 
probabilities in Our World
implies post-selection of coincident events from the two Twin Worlds.  That
each factor can even be interpreted as a probability itself but at the
same time represents the absolute value of a quantum
mechanical probability amplitude requires, in the frame work of the
gradient states,  an interference-free
gradient state, i.e.~for each $\bm i,\rho$, there is at most one value of the
gradient variable $\bm \sigma$ populated. Hence, refreshments,
possibly after each interference-generating step, but at the latest
before each coincidence event are a necessary part of the dynamics in
each Twin World. 
\\

What picture of reality do we get from this?
If we accept that we live in a 3D configuration space, where each
particle has position coordinates $\bm x\in\mathbb{R}^3$, then the
most economical reality appears to require one additional coordinate
for the gradient direction. In principle, also one
of the components of $\bm x$ might be used, but this then begs the
question, which coordinate would be different from the others, and
immediately breaks isotropy of space.  If we accept one extra variable
for the gradient direction, we have a real vector $\bm X^w=(\bm
x^w,\sigma^w)\in\mathbb{R}^4$ for 
position and gradient variable with $w\in\{\rI,\rII\}$ for the two Twin
Worlds, i.e.~an 8D vector $\bm X=(\bm X^\rI,\bm X^\rII)$ that fully and realistically
describes the position of a particle in the Full World.  With ``Full
World'' I mean the two Twin Worlds including the gradient
variable. Contrary to previous propositions that
space might have extra dimensions, the extra
spatial dimensions in the Twin World II are not curled up to a very
small length scale, because otherwise they would
be curled up in Twin World I as well, as the two
Twin Worlds are identical twins.  But then they would be curled up in Our
World with $\bm x^\rI=\bm x^\rII$ which is clearly not the case.  This
leads to the question why we would not see particles disappear into
the higher dimensions, an argument that motivated the idea of curling
up in previous theories of higher dimensions \cite{clifton_modified_2012,emparan_black_2008,stojkovic_vanishing_2014,elvang_quantum_2014}.  I come back to this
question in the Discussion Section \ref{sec.discus}.

The dimensions corresponding to the gradient directions
$\sigma^\rI,\sigma^\rII$ might correspond to an internal degree of
freedom, similarly to a spin degree-of-freedom in standard QM.  As the
difference between two probability vectors already does the job, a simple
pseudo-spin-1/2, $\sigma^\rI,\sigma^\rII \in\{0,1\}\asdef
\mathbb Z_2$ suffices, without
the need of any length-scale or continuous variable attached to it.
Going back to complex states, the index $\rho$ takes only values 0,1
anyhow, and might correspond to an additional internal, spin-like
degree of freedom as well.  Each Twin World can carry its own ReIm bit
$\rho^w$, and eq.\eqref{eq:born2.1} implies that there is coincidence also
of the ReIm bit $\rho$, i.e.~$\rho^\rI=\rho^{\rII}=\rho$.  The
resulting joint probabilities are marginalized over $\rho$, as in
standard QM, where Born's rule leads to the addition of the 
squares of real and imaginary parts of the wave
function. Alternatively, there could be just a single $\rho$, but both
twin worlds would have to interact then with the same corresponding
bit. \\

In any case, resorting to discrete values $\sigma,\rho$ leads to a $\bm X^w=(\bm
x^w,\sigma^w,\rho^w)\in\mathbb{R}^3\times \mathbb Z_2^2$. With
$\sigma^\rI,\sigma^\rII,\rho^\rI,\rho^\rII$ taken together, a particle that appears as
spinless particle  in Our World is reminiscent of 
four spin-1/2 particles in
the Full World. ``Reminiscent'' in the sense that in each
Twin World we have a ``spinor probability distribution'', i.e.~a
four-component probability vector where the difference between two
components with fixed $\rho$ determines one half of the realified wave
function (the other half given by the complementary value of $\rho$),
rather than those two components determining a 
two-component complex spinor wavefunction as usual.\\

One might be inclined to think that with the grabit approach and
the Twin World interpretation of quantum mechanics, we get a
hidden variable theory, with the extra variables compared to $\bm x$
in Our World playing the role of the hidden variables, and a global
probability distribution over all variables in the Full World.  One
would conclude that certainly no Bell inequality  
could then be violated.  This is, however, not true. 
In the next chapter I investigate 
We will see that 
the Twin World formalism allows one to exactly reproduce QM correlations, including
the violation of the Bell-type CHSH inequalities.

\section{Simple model systems}

\subsection{A single qubit with a phase shift}
As simplest example consider the single-qubit state
$\ket{\Psi}=(\ket{0}+e^{i\varphi}\ket{1})/\sqrt{2}$.  It can be created
by acting on the qubit in state $\ket{0}$ with a Hadamard gate and
then with the general phase gate
\begin{equation}
  \label{eq:Uphi}
  U(\varphi)=\begin{pmatrix}
    1 & 0\\
    0& e^{i\varphi}
    \end{pmatrix}\,,
  \end{equation}
see FIG.~\ref{fig:qcSingQubit}.
  $U(\varphi)$ maps to a corresponding realified gate
  $\tilde{U}(\varphi)$, representable in the basis of realified states as
  \begin{equation}
    \label{eq:Utilde}
    \tilde{U}(\varphi)=\begin{pmatrix}
    \bm 1_2 & \bm 0_2\\
    \bm 0_2& \begin{matrix}
    \cos \varphi & -\sin \varphi\\
    \sin\varphi& \cos\varphi
    \end{matrix}
  \end{pmatrix}\,,
  \end{equation}
where $(i,\rho)$ ordering of indices was used.  When acting on
$\ket{\Phi}=(1,0,1,0)^t/\sqrt{2}$, the realified version of the state $(\ket{0}+\ket{1})/\sqrt{2}$ obtained after the Hadamard gate acting on the initial $\ket{0}$, $\tilde{U}(\varphi)$
creates the realified state 
$\ket{\Phi}=(\ket{00}+\cos\varphi\ket{10}+\sin\varphi\ket{11})/\sqrt{2}$,
representable by two real qubits.  
Hence, written as a real gate acting on a real state vector, the general phase
gate acts as a controlled real phase gate and generates a rotation
in the state space of the target-qubit (=ReIm qubit) if the first
(control-)qubit is in the state $\ket{1}$, see FIG.~\ref{fig:qcSingQubitReal}.\\

Recently, there has been substantial work targeted at showing that QM
must be described fundamentally by a complex wave function
\cite{renou_quantum_2021}.  Necessary for these proofs is the assumption that
only local operations are permitted.  Indeed, additional non-locality
arises in the
present example in the control-target structure.  Conversely, if one
insists on a purely local description of quantum mechanics and reality
of the wave function, one can find bounds on correlations, similar to
Bell inequalities.  These have been shown to be violated
experimentally, a result that was interpreted in the sense that a purely real
description of quantum mechanics is not possible \cite{renou_quantum_2021}. In
the present example, we see that the additional non-locality is rather
benign.  It results from the interaction between the original qubit and the
ReIm qubit. If $\rho$ is an internal spin-like degree of freedom, this
interaction is local in position space. 
In the case of a many-particle system, many different combinations of
the local $\rho_\nu^w$ contribute to the global $\rho^w$ in each Twin
World.  
This is the same kind of non-locality as in
standard quantum mechanics: a local phase can globally and instantaneously change a relative phase
of a component of an entangled state. 
\\

The translation of $\tilde{U}(\varphi)$ to a grabit gate was given in
\cite{braun_stochastic_2022}, and broken into different, permutationally related forms for
the four quadrants of the phase $\varphi$. In the subspace where the
control $c=1$, we get the stochastic map
\begin{equation}
  \label{eq:SRphi1}
  S_{R(\varphi)}^{(1)}=\frac{1}{1+q}\begin{pmatrix}
    q&0&0&1\\
    0&q&1&0\\
    1&0&q&0\\
    0&1&0&q
    \end{pmatrix}\,,
  \end{equation}
 in the quadrant $0<\varphi\le \pi/2$ 
  where $q=|\cot(\varphi)|$.   $S_{R(\varphi)}^{(1)}$ acts on the probability vector $(P_0,P_1,P_2,P_3)\in
  \mathbb{R}_+^4$, $\sum_{I=0}^3 P_I=1$, where $I$ enumerates the
  b4vs of the ReIm grabit.
In an emulation, the  stochastic process is implemented by acting on the drawn
  realizations and is defined  even for a
  single realization: E.g.~if the b4v is 0, we flip it with probability
  $1/(1+q)$ to 2 and keep it with probability $q/(1+q)$, and
  accordingly for the other b4v's. 
In the other quadrants, the signs of $\cos\varphi$ or $\sin\varphi$
change, and we get three more $S_{R(\varphi)}^{(i)}$.  A sign change of $\cos\varphi$ is implemented by moving $q$ in
$S_{R(\varphi)}^{(1)}$ one row up or down while keeping the same blv, see
\cite{braun_stochastic_2022} for details. 
As noted there, the pure rotation has to be combined with an amplitude reduction gate that reduces the 2-norm of the state in the $c=0$ subspace by the same amount as in the $c=1$ subspace, where $c$ stands for the control qubit.  The amplitude reduction gate can be expressed as  
\begin{equation}
  \label{eq:A}
  R_2=\begin{pmatrix}
    1-r_0& r_0 \\
    r_0 & 1-r_0\\
    \end{pmatrix}\,.
\end{equation}
with $r_0=(1-\mathcal{N})/2$ and $\mathcal{N}=\sqrt{1+q^2}/(1+|q|)$. It is implemented as a
stochastic process $0\leftrightarrow 1$ with probability $r_0$,
whereas with probability $1-r_0$ the two b4v's are not flipped, and
correspondingly for $2\leftrightarrow 3$. 
With this, ${U}(\varphi)$ on a single qubit is emulated with a 16
  $\times 16$ stochastic map on two grabits of the form
  \begin{equation}
    \label{eq:SfullRphi}
    S_{R(\varphi)}^{\text{full},i}=R_2\otimes \bm 1_4\oplus \bm 1_2\otimes S_{R(\varphi)}^{(i)}\,.
  \end{equation}
This leads to a translation of the quantum circuit of
FIG.~\ref{fig:qcSingQubitReal} to the grabit quantum circuit in each of
the two Twin Worlds shown in FIG.~\ref{fig:qcSingGrabit}. The
refreshment $\mathcal{R}$ at the end is necessary, as the two Hadamard
gates in series generate interference. If based on the probability
distribution or a histogram with infinitely many samples, it ensures that
$\tilde{p}_{i,\rho}^{I}=\tilde{p}_{i,\rho}^{II}=|\Phi_{i,\rho}|/||\Phi||_1$, where $\Phi$
is the true, realified, quantum mechanical state. Hence, in
each Twin World, the quantum mechanical state also becomes directly,
by the absolute values of its elements, a probability
distribution. Coincidence of identical outcomes from the two Twin
Worlds  marginalized over $\rho$ then gives exactly the statistics known from the Born-2 rule.

In FIG.~\ref{fig:qcSingResult} I show that this works: Up to small
statistical fluctuations due to the finiteness of the ensemble (here
$\Nl=100\,000$), the grabit quantum circuit from
FIG.~\ref{fig:qcSingGrabit} reproduces exactly the statistics of
standard quantum mechanics.

\begin{figure}[h]
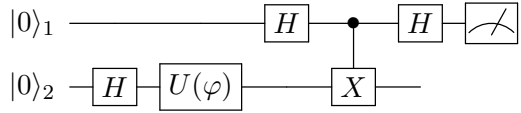

	\centering
	\begin{equation*}
		\qc{
                  \lstick{\ket{0}_1} &\qw     &\qw           &\gate{H}&\ctrl{1}& \gate{H} &\meter \\
                  \lstick{\ket{0}_2} &\gate{H}&\gate{U(\varphi)}&\qw     &\gate{X}& \qw
		}
	\end{equation*}
	\caption{Quantum circuit for the creation of a single-qubit quantum superposition $\ket{\psi(\varphi)}=(\ket{0}+e^{i\varphi}\ket{1})/\sqrt{2}$ in the second qubit and the subsequent measurement of $X$ with the help of the first  (ancilla) qubit measured in the computational basis.  
        } 
	\label{fig:qcSingQubit}
\end{figure}

\begin{figure}[h]
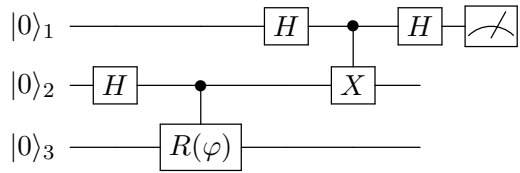

	\centering
	\begin{equation*}
		\qc{
                  \lstick{\ket{0}_1} &\qw     &\qw           &\gate{H}&\ctrl{1}& \gate{H} &\meter \\
                  \lstick{\ket{0}_2} &\gate{H}&\ctrl{1}&\qw     &\gate{X}& \qw\\
                  \lstick{\ket{0}_3} &\qw &\gate{R(\varphi)}&\qw     &\qw& \qw
		}
	\end{equation*}
	\caption{As FIG.~\ref{fig:qcSingQubit} but for the realified quantum state $\ket{\Phi}$. The third qubit works as ReIm-qubit, i.e.~codes the real or imaginary part of the state.  The complex $U(\varphi)$ gate on the second qubit becomes a controlled rotation gate $R(\varphi)$ on the ReIm qubit. 
        }
	\label{fig:qcSingQubitReal}
\end{figure}

\begin{figure}
	\centering
	\begin{equation*}
		\qc{
\lstick{(I)\mbox{ }\kett{0}} &\qw       &\qw           &\gate{S_H}&\ctrl{1}  & \gate{S_H} &\multigate{2}{\mathcal{R}} &\qw&\sgate{\mathcal{C}}{3}&\\
\lstick{\kett{0}} &\gate{S_H}&\ctrl{1}      &\qw       &\gate{S_X}& \qw        &\ghost{\mathcal{R}}        &&&\meter\\
\lstick{\kett{0}} &\qw       &\gate{S_{R(\varphi)}}&\qw    &\qw       & \qw        & \ghost{\mathcal{R}}       &&&\\
\lstick{(II)\mbox{ }\kett{0}} &\qw       &\qw           &\gate{S_H}&\ctrl{1}  & \gate{S_H} &\multigate{2}{\mathcal{R}} &\qw&\gate{\mathcal{C}}&\\
\lstick{\kett{0}} &\gate{S_H}&\ctrl{1}      &\qw       &\gate{S_X}& \qw        &\ghost{\mathcal{R}}        &&&\\
\lstick{\kett{0}} &\qw       &\gate{S_{R(\varphi)}}&\qw    &\qw       & \qw        & \ghost{\mathcal{R}}       &&&
		}
	\end{equation*}
	\caption{Two identical grabit quantum circuits in the two Twin
          Worlds ($\rI,\rII$) corresponding to
          FIG.~\ref{fig:qcSingQubitReal}. In each Twin World, the
          realified quantum state $\ket{\Phi}$ is encoded in three gradient
          bits (``grabits''), each realized with two classical
          stochastic bits, initialized in grabit state $\kett{0}$ (the double kets denote grabit states).
          Unitary gates become stochastic maps acting on those
          classical bits, and the complex single-qubit $U(\varphi)$
          becomes a controlled real rotation gate on the ReIm grabit.
          The refresh gates $\mathcal{R}$ make the final state
          interference free and assures that the physical probability
          distributions $\tilde{p}^\rI_{i}$ and $\tilde{p}^\rII_{i}$ over
          the outcomes $i=0,\ldots,7$ satisfy
          $\tilde{p}^\rI_{i}=\tilde{p}^\rII_{i}=|\Phi_i|/||\Phi||_1$,
          where $\Phi$ is the true, realified, quantum mechanical
          state. Post-selected coincident outcomes of the blvs from the two Twin Worlds
          marginalized over the second and third grabit define the observed
          outcomes and their statistics in Our World, and satisfy the
                    standard Born-2 rule,
          $\ptil_{i_1}=\sum_{i_2,i_3=0,1}{\ptil}_{i_1i_2i_3}=\sum_{i_2,i_3=0,1}\tilde{p}^\rI_{i_1i_2i_3}\cdot\tilde{p}^\rII_{i_1i_2i_3}=\sum_{i_2,i_3=0,1}\Phi_{i_1i_2i_3}^2=\sum_{i_2}|\Psi_{i_1i_2}|^2$,
          where $i_3$ labels the states of the ReIm grabit. 
                  } 
	\label{fig:qcSingGrabit}
\end{figure}

\begin{figure}[h]
    \centering
   \includegraphics[width=0.5\textwidth]{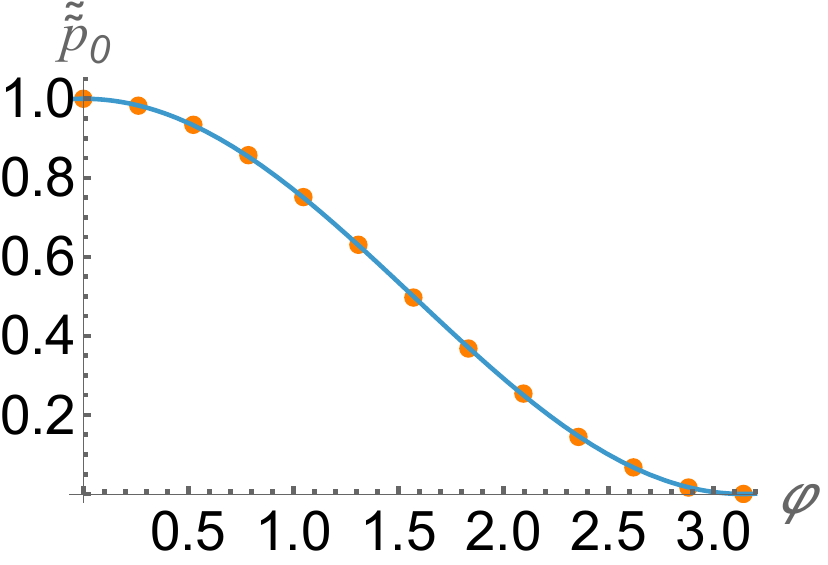}
    \caption{Rotation of the Bloch vector of a qubit in the $xy$-plane
      (probability of finding 0 in the computational basis measurement of the first qubit, $\ptil_0(\varphi)=(1+\langle X(\varphi)\rangle)/2=(1+\cos\varphi)/2$ in the state
      $\ket{\Psi(\varphi)}=(\ket{0}+e^{i\varphi}\ket{1})/\sqrt{2}$)
      calculated from the stochastic emulation with grabits in the two
      Twin Worlds and coincidence detection of the measurement
      outcomes 0 and 1. Blue continuous line: quantum mechanical
      prediction. Orange dots: results from the stochastic emulation (see
      FIG.~\ref{fig:qcSingGrabit}), $\Nl=100\,000$ }
    \label{fig:qcSingResult}
\end{figure}

\subsection{Two qubits and the violation of the CHSH inequality}
While all statistics of a single qubit can be reproduced with a
classical hidden variable model (see \cite{Bell87}, chapter 1), one can rule out local hidden
variable theories as modeled by Bell as valid descriptions of quantum reality by
examining the statistics of two qubits.  Let 
Alice and Bob each control one out of two qubits which are initially given to them in the singlet state $\ket{\Psi}=(\ket{01}-\ket{10})/\sqrt{2}$.  They obtain arbitrarily many copies, and then both measure randomly one of two observables, each with two possible outcomes  $\pm 1$: Alice measures $Q$ or $R$, Bob measures $S$ or $T$.  It is easy to show \cite{ClauserHSH69} that if Nature is described by a local hidden variable theory in the sense of Bell, the expectation values of their joint measurements must satisfy 
\begin{equation}
  \label{eq:CHSH}
  |\braket{QS}+\braket{RS}+\braket{RT}-\braket{QT}|\le 2
\end{equation}
Yet, quantum mechanics predicts, and numerous experiments have
verified (see \cite{hensen_loophole-free_2015} and references
therein), that by choosing $Q,R,S,T$ parameterized as  
\begin{equation}
  \label{eq:Qth}
  Q(\theta)=(\cos\theta) X+(\sin \theta) Z =\begin{pmatrix}
    \sin\theta& \cos\theta \\
    \cos\theta& -\sin\theta\\
    \end{pmatrix},\,\,\theta\in\mathbb{R},
\end{equation}
(which is at the same time hermitian and unitary with eigenvalues $\pm
1$), with values of $\theta_1,\theta_2$ as in table \ref{tab1} there
are values of $\varphi$ for which inequality \eqref{eq:CHSH} is violated.
The maximum violation is found for $\varphi=-\pi/4$  and gives
$2\sqrt{2}$ on the left hand side.   This proves that Nature cannot be
described by a local hidden variable theory as modeled by Bell, and the same is true by
the same argument, of course, for the Twin World construction of quantum reality if its statistical predictions agree with those of standard QM, which is what I am  going to show now. \\   

All of $Q(\theta)$ are represented as real matrices in the
computational basis, and also the initial singlet state is real in the
computational basis. There is, therefore, no need for a ReIm qubit or grabit in the present context
and we will skip it in the following.  The corresponding stochastic
matrix that represents the action of $Q(\theta)$ on a grabit  is, in
the first quadrant $0\le \theta < \pi/2$, given by 
\begin{equation}
  \label{eq:SQth}
  S_{Q(\theta)}^{(1)}=\frac{1}{1+q}\begin{pmatrix}
    1&0&q&0\\
    0&1&0&q\\
    q&0&0&1\\
    0&q&1&0
    \end{pmatrix}\,,
  \end{equation}
where now $q=|\cot \theta|$. The stochastic maps in the other three
quadrants of $\theta$ are again obtained by shifting the 1's or $q$'s
a line up or down when $\sin\theta$ or $\cos\theta$ changes sign,
respectively, yielding   
\begin{equation}
  \label{eq:SQth2-4}
  S_{Q(\theta)}^{(2)}=\frac{1}{1+q}\begin{pmatrix}
    1&0&0&q\\
    0&1&q&0\\
    0&q&0&1\\
    q&0&1&0
  \end{pmatrix}\,,
    S_{Q(\theta)}^{(3)}=\frac{1}{1+q}\begin{pmatrix}
    0&1&0&q\\
    1&0&q&0\\
    0&q&1&0\\
    q&0&0&1
  \end{pmatrix}\,,
    S_{Q(\theta)}^{(4)}=\frac{1}{1+q}\begin{pmatrix}
    0&1&q&0\\
    1&0&0&q\\
    q&0&1&0\\
    0&q&0&1
    \end{pmatrix}
  \end{equation}
  for $\pi/2\le \theta<\pi$, $-\pi\le \theta<-\pi/2$, and $-\pi/2\le \theta<0$, respectively.
  A measurement of $Q(\theta_1)$ and $Q(\theta_2)$ on qubits 2,4 can once again
  be accomplished with the aid of an ancilla qubit each and raising
  $Q(\theta_1)$ and $Q(\theta_2)$ to controlled gates \cite{nielsen_quantum_2011}. The control is by those ancilla   qubits, brought into a superposition of $\ket{0}$ and $\ket{1}$, see
  FIG.~\ref{fig:CHSHqubit} for the qubit case and FIG.\ref{fig:CHSHgrabit} for the grabit version. 
\begin{table}
  \centering
\begin{tabular}{c|c||c|c}
  $Q(\theta_1)$&$Q(\theta_2)$&$\theta_1$ & $\theta_2$\\\hline
  $Q$&$S$& $\pi/2$ & $\pi/2+\varphi$\\
  $R$&$S$& 0 & $\pi/2+\varphi$\\
  $R$&$T$& 0 & $\varphi$\\
    $Q$&$T$& $\pi/2$ & $\varphi$
  \end{tabular}\label{tab1}
  \caption{Measurement operators of Alice and Bob given by angles
    $\theta_1$, $\theta_2$ in \eqref{eq:Qth}. 
  } 
\end{table}
\begin{figure}
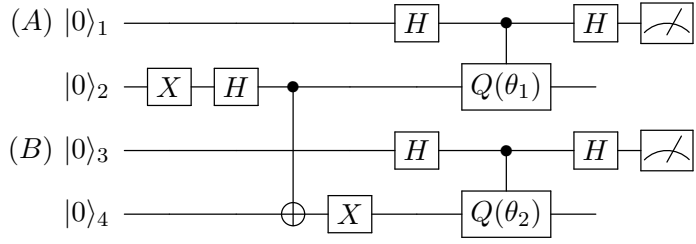

	\centering
	\begin{equation*}
		\qc{
\lstick{(A)\mbox{ }\ket{0}_1} &       \qw&\qw&\qw &\qw&\gate{H}&\ctrl{1}  & \gate{H}  &\meter\\
\lstick{\ket{0}_2} & \gate{X}&\gate{H}&\ctrl{2}&\qw     &\qw&\gate{{Q(\theta_1)}}          & \qw               &\\
\lstick{(B)\mbox{ }\ket{0}_3} & \qw&\qw&\qw&\qw             &\gate{H}&\ctrl{1}  & \gate{H}  &\meter\\
\lstick{\ket{0}_4} &  \qw&\qw&\targ&\gate{X}& \qw&\gate{{Q(\theta_2)}}         & \qw        &   		}
	\end{equation*}
	\caption{Quantum circuit for the realization of the CHSH
          experiment. The first four gates prepare a singlet state
          between qubits 2,4, on which $Q(\theta_1)$ and
          $Q(\theta_2)$ are then measured via a
          measurement of the ancilla qubits 1,3 in the computational
          basis, respectively.  Four different combinations of $\theta_1,\theta_2$
          are randomly chosen in a large number of runs, and empirical
          averages computed that estimate the expectation values in
          \eqref{eq:CHSH}.   The grabit emulation of the quantum
          circuit is shown in FIG.\ref{fig:CHSHgrabit} and the results are in FIG.\ref{fig.chshResult}.
        } 
	\label{fig:CHSHqubit}
\end{figure}

\begin{figure}
	\centering
	\begin{equation*}
		\qc{
\lstick{(A^{\rI})\mbox{ }\kett{0}} &       \qw&\qw&\qw &\qw&\gate{S_H}&\ctrl{1}  & \gate{S_H} &\multigate{3}{\mathcal{R}} &\meter\\
\lstick{\kett{0}} & \gate{S_X}&\gate{S_H}&\ctrl{2}&\qw     &\qw&\gate{S_{Q(\theta_1)}}          & \qw        & \ghost{\mathcal{R}}       &\\
\lstick{(B^{\rI})\mbox{ }\kett{0}} & \qw&\qw&\qw&\qw             &\gate{S_H}&\ctrl{1}  & \gate{S_H} &\ghost{\mathcal{R}} &\meter\\
\lstick{\kett{0}} &  \qw&\qw&\targ&\gate{S_X}& \qw&\gate{S_{Q(\theta_2)}}         & \qw        & \ghost{\mathcal{R}}&       
		}
	\end{equation*}
	\caption{Grabit quantum circuit for the realization of the
          CHSH experiment in one of the two Twin Worlds. With the
          chosen initial singlet state between grabits 2,4 and the
          four possible measurements from Table \ref{tab1}, the
          quantum state remains always real and a ReIm grabit is not
          needed and not shown. The first four gates prepare a singlet
          state between grabits 2,4, on which $Q(\theta_1)$ and
          $Q(\theta_2)$, respectively, are then measured via a
          measurement of the ancilla qubits 1,3 in the computational
          basis. All grabit gates are implemented as stochastic
          maps. $\mathcal{R}$ is the refreshment step that leads to
          $\tilde{p}^\rI_{i}=|\Phi_i|$ (and correspondingly in the
          second Twin World, not shown here,
          $\tilde{p}^\rII_i=|\Phi_i|$). In Our World, only the
          coincidences from the two Twin Worlds of all four measurement outcomes 00,01,10, and 
          11 are observed and lead to outcome statistics identical to
          the one predicted by standard quantum mechanics --- and hence a violation of the CHSH inequality, see FIG.~\ref{fig.chshResult}.
        } 
	\label{fig:CHSHgrabit}
\end{figure}

\begin{figure}[h]
    \centering
    \includegraphics[width=0.5\textwidth]{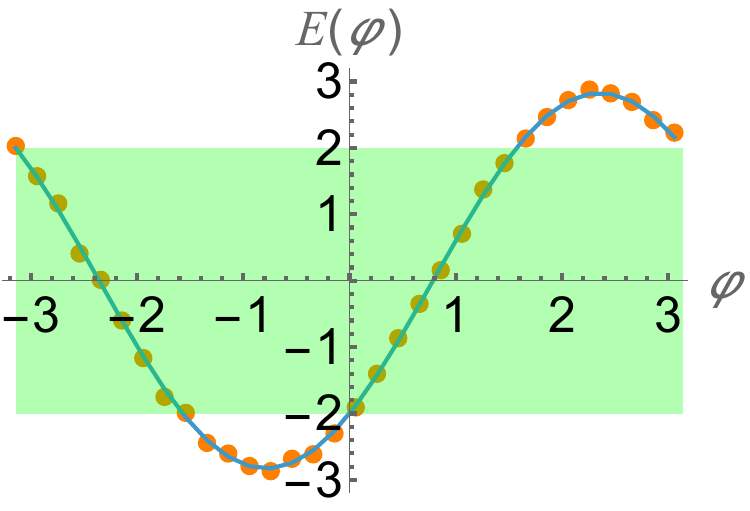}
    \caption{Violation of the CHSH inequality from the stochastic
      emulation with grabits in the two Twin Worlds and coincidence
      detection of the four measurement outcomes 00,01,10, and
      11. Blue continuous line: quantum mechanical prediction for $
      E(\varphi)\defas
      \braket{QS}+\braket{RS}+\braket{RT}-\braket{QT}$ with $Q,R,S,T$
      as defined in Table \ref{tab1}.   Orange dots:
      results from the stochastic emulation with $\Nl=10^4$. Any value absolutely
      larger than 2 signals a violation of the CHSH inequality and
      hence invalidates a description with a local realistic theory as modeled by Bell, whereas the values in the green rectangular area are compatible with that model. 
    }
    \label{fig.chshResult}
\end{figure}
In FIG.\ref{fig.chshResult} we see that the grabit emulation of the
CHSH experiment, including refreshments and post-selection from the
two Twin Worlds, leads to the same statistics as standard QM, and in
particular to a violation of the CHSH inequality for the same values
of $\varphi$ as in standard QM.

\section{Dynamics}\label{sec.dynami}
One might think that with the possibility of decomposing any unitary
matrix in an arbitrarily large Hilbert space of dimension $2^n$ into
elementary gates that can be emulated with stochastic matrices
\cite{braun_stochastic_2022}, one can also reproduce any quantum time evolution
with arbitrary hamiltonians.  However, the mentioned Hilbert space is
the one of $n$ qubits, whereas it is not so clear whether one can
emulate the quantum time evolution of a single particle in a large
Hilbert space in a natural way. In other words, one would like to find
a stochastic process that emulates Schr\"odinger's equation for a
non-relativistic quantum particle.  This is what we are going to
investigate in this Section.\\

\subsection{Single particle dynamics}\label{sec.dynami2}
Starting point is the unitary time evolution over an infinitesimal
time-interval $dt$,
\begin{equation}
  \label{eq:U1}
  U=\mathbb{1}-i H dt/\hbar+\ord(dt^2)\,,
\end{equation}
where the Hamiltonian for a single particle of mass $m$ in position representation reads
\begin{equation}
  \label{eq:H}
  H=-\frac{\hbar^2}{2m}\Delta +V(\bm x)\,,
\end{equation}
with $V(\bm x)$ the potential energy.
The unitary operator in eq.\eqref{eq:U1} propagates the true quantum
state $\ket{\Psi}$ according to
\begin{equation}
  \label{eq:psidt}
  \ket{\Psi(dt)}=U\ket{\Psi(0)}=\ket{\Psi(0)}-i H
  \frac{dt}{\hbar}\ket{\Psi{(0)}}+\ord(dt^2)\,.
  \end{equation}
We ``realify'' the state, $\Phi(\bm x)=(\Re \Psi(\bm x),\Im
\Psi(\bm x))$, with the wavefunction $\Psi(\bm x)=\braket{\bm
  x|\Psi}$. The chosen block-structure now implies that the ReIm bit
is taken as most significant bit, and so the index-ordering convention
in this Section will be $(\rho,\sigma, \bm x)$.  I will refer to the
triple $(\rho,\sigma, \bm x)$ as ``phased position variable'', ``ppv''
for short.
$\Phi(\bm x)$ is propagated by $\tilde{U}$,
\begin{eqnarray}
  \label{eq:Phidt}
  \Phi(\bm x, dt)&=&\tilde{U}\Phi(\bm x, 0)\\
  \tilde{U}&=&
\begin{pmatrix}
\mathbb{1}_N
  & \rvline & (-\Delta+\tilde{V}(\bm x)) d\tilde{t} \\
\hline
  (\Delta-\tilde{V}(\bm x)) d\tilde{t} & \rvline &
  \mathbb{1}_N
\end{pmatrix}\,,               
\end{eqnarray}
with $d\tilde{t}=(\hbar/(2m))dt$, $\tilde{V}(\bm x)=V(\bm
x)/(\hbar^2/(2m))$. 
Let us first consider a particle in one dimension with discretized
position space, i.e.~on a regular 1D lattice with $N$ equally spaced
sites with positions $x_i=i\, a$ and periodic boundary conditions.
The lattice constant $a$ is set to 1 in the following.  The
Laplace operator is then represented in matrix form as 
\begin{eqnarray}
  \label{eq:DelO1}
  \Delta&=&-2\,\mathbb{1}_N+O_1
\end{eqnarray}
where the matrix elements of $O_1$ are given by $(O_1)_{i,j}=\delta_{i,j+1}+\delta_{i,j-1}$, $i,j\in \{1,\ldots,N\}$, and indices $1,N+1$ are identified. This gives 
\begin{eqnarray}
  \label{eq:UtilDis}
  \tilde{U}&=&
\begin{pmatrix}
\mathbb{1}_N 
  & \rvline & (2\,\mathbb{1}_N-O_1+\tilde{V}) d\tilde{t} \\
\hline
  (-2\,\mathbb{1}_N+O_1-\tilde{V}) d\tilde{t} & \rvline &
  \mathbb{1}
\end{pmatrix}               
\end{eqnarray}
with $\tilde{V}\defas\text{diag}(\tilde{V}_1,\ldots,\tilde{V}_N)$,
$\tilde{V}_i\defas\tilde{V}(x_i)$. We try to reproduce this evolution
of $\Phi$ via a stochastic map $S$ of a probability vector
$\{P_I\}_{I=0,\ldots 4N-1}$ with triple indices $I=\{\rho,\sigma,x\}$
that gives rise to an emulation $\phi$ of the true quantum mechanical
$\Phi$ via 
\begin{equation}
  \label{eq:phirx}
\phi_{\rho\,x}=\sum_{\sigma=0,1}(-1)^\sigma P_{\rho\,\sigma\,x}\,,  
\end{equation}
where $\rho\in\{0,1\}$ decodes the ReIm bit.  
The ``sign index'' $\sigma\in\{0,1\}$ determines
whether $P_{\rho\sigma\,x}$ contributes to the positive or negative
part in \eqref{eq:phirx}, and  $x=1,\ldots,N$ codes the position of
the particle on the 1D lattice. As in \cite{braun_stochastic_2022} we
can understand $\phi_{\rho\,x}$ as the discretized derivative of a
classical probability distribution in a new direction $\sigma$.  The
stochastic map propagates $P$ according to  
\begin{equation}
  \label{eq:Pprop}
  P(\tilde{t}+d\tilde{t})_{\rho\sigma\,x}=S_{\rho\sigma\,x,\rho'\sigma'\,x'}P_{\rho'\sigma'\,x'}(\tilde{t})\,,
\end{equation}
where Einstein summation convention over repeated indices is used. The
$\sigma$ index doubles the size of $S$ compared to $\tilde{U}$.

\subsection{Free particle - pure kinetic energy}
Let us first consider the case of $V(x)=0$, i.e.~a free particle with only kinetic energy. It is then straight-forward to write down a stochastic matrix that emulates the quantum evolution:
Up to normalization,
\begin{equation}
  \label{eq:Stil}
  \tilde{S}=\mathbb{1}_{4N}+d\ttil
\begin{pmatrix}
  \mathbb{0}_N  &\rvline &\mathbb{0}_N &\rvline  &2\,\mathbb{1}_N&\rvline  & O_1\\
  \hline
  \mathbb{0}_N  &\rvline &\mathbb{0}_N &\rvline  &O_1 &\rvline & 2\,\mathbb{1}_N\\
  \hline
  O_1& \rvline & 2\,\mathbb{1}_N& \rvline &\mathbb{0}_N& \rvline & \mathbb{0}_N \\
  \hline
  2\,\mathbb{1}_N& \rvline & O_1 & \rvline &\mathbb{0}_N& \rvline & \mathbb{0}_N
\end{pmatrix}               
\end{equation}
reproduces the correct amplitude ratios in the exact $\Phi$ to order
$d\ttil$. Here, $(\mathbb{0}_N)_{ij}=0\,\forall i,j=1,\ldots,N$. 
The structure of \eqref{eq:Stil} reflects that phase factors of
$\ket{\Psi(0)}$ must propate to $\ket{\Psi(d\ttil)}$.  Let
$e_{\rho_0\sigma_0x_0}$ be a localized probability distribution
(``computational basis state'') with components
$(e_{\rho_0\sigma_0x_0})_{\rho\sigma
  x}=\delta_{\rho\rho_0}\delta_{\sigma\sigma_0}\delta_{xx_0}$, mapped
to $e_{\rho_0\sigma_0x_0}'$ with components  
$(e_{\rho_0\sigma_0x_0}')_{\rho\sigma x}$. The combinations
$(\rho_0\sigma_0)=(00),(01),(10),(11)$ correspond to phase factors
$1,-1,i,-i$ of $\braket{x|\Psi_0}$. A change of the initial phase of
$\braket{x|\Psi_0}$ with such a phase factor implies the same change
of phase factor of the final state. E.g.~$\Psi\to i\Psi$ implies
$\Psi'\to i \Psi'$, which is equivalent to 
\begin{equation}
  \label{eq:PhiTrafo}
  (\Phi_{0\sigma x},\Phi_{1\sigma x})\to (-\Phi_{1\sigma x},\Phi_{0\sigma x})\,,
\end{equation}
for the realified state. Such transformations relate blocks of $P'_{\rho_0\sigma_0x_0}$ with different $\rho_0,\sigma_0$ to each other, and imply a block structure of $S$,
\begin{equation}
  \label{eq:Sblock}
  \tilde{S}=\mathbb{1}_{4N}+d\ttil
\begin{pmatrix}
A&\rvline &B&\rvline  &D &\rvline  & C\\
  \hline
B&\rvline &A&\rvline  &C &\rvline  & D\\
  \hline
C&\rvline &D&\rvline  &A &\rvline  & B\\
  \hline
D&\rvline &C&\rvline  &B &\rvline  & A\\
\end{pmatrix}
\end{equation}
that is indeed obeyed by \eqref{eq:Stil}. \\

$\Stil$ is not a stochastic matrix yet: we also need that all columns
be normalized to 1-norm 1. To order $d\ttil$ we have for a column
$\Stil_i$ before renormalization
\begin{equation}
  \label{eq:normStil}
  ||\Stil_i||_1=1+4 d\ttil\,.
\end{equation}
Dividing each column by that norm gives to order $\mathcal{O}(dt)$ 
\begin{eqnarray}
  \label{eq:Sfin}
S&=&\mathbb{1}_{4N}+d\ttil
\begin{pmatrix}
  -4\,\mathbb{1}_N  &\rvline &\mathbb{0}_N &\rvline  &2\,\mathbb{1}_N&\rvline  & O_1\\
  \hline
  \mathbb{0}_N  &\rvline &  -4\mathbb{1}_N &\rvline  &O_1&\rvline & 2\,\mathbb{1}_N\\
  \hline
  O_1& \rvline & 2\,\mathbb{1}_N& \rvline &  -4\mathbb{1}_N& \rvline & \mathbb{0}_N \\
  \hline
  2\,\mathbb{1}_N& \rvline & O_1 & \rvline &\mathbb{0}_N& \rvline &   -4\mathbb{1}_N
\end{pmatrix}\\
  &\asdef&\mathbb{1}_{4N}+d\ttil\,G_T\,,\label{eq:GT}
\end{eqnarray}
where the last line defines the generator $G_T$ of the stochastic map over
an infinitesimal time step for pure kinetic energy.\\

The entries in all columns of $S$ are identical up to
permutation.  As a consequence, the 2-norms of the quantum states
obtained from propagating a canonical basis state are independent of
$x$.  
Therefore, in a superposition of initial position eigenstates,
$\ket{\Psi}=\sum_x c_x\ket{e_x}$, the reduced 2-norms of the 
propagated position eigenstates $\ket{e'_x}$ do not change the relative weights of the $\ket{e'_x}$ given by the 
$|c_x|^2$, in agreement with standard QM. Hence, for a free particle there is no need for an
``amplitude reduction gate'' \cite{braun_stochastic_2022}.\\
 $G_T$ defined in \eqref{eq:GT} is the generator of
an infinitesimal stochastic map. However, if the full $4N$-dimensional probability
distribution evolved according to $\dot{P}(t)= G_T P(t)$, where the dot means
time-derivative,   this would
lead to an exponential reduction of the contrast of $\ket{\phi(t)}$ as
function of time $t$. 
In principle, for
reproducing Schr\"odinger's equation in each of the two Twin Worlds, this does not matter,
as we can renormalize $\ket{\phi(t)}$ to $||\ket{\phi(t)}||_2=1$ at all
times.  However, in order to have $|\braket{x|\phi}|=\tilde{p}_x$ in 
each of the two Twin Worlds, the generator $G$ must be followed by a nonlinear
``refreshment operation'' \cite{braun_stochastic_2022} $\mathcal{R}$.   
Besides slowing down the loss of contrast in numerical emulations, the
refreshment step has the important consequence of making the state 
``interference-free'', i.e.~{\em either} $P_{\rho x}^{\rI,\rII}\defas P_{\rho 0 x}+P_{\rho 1
  x}=P_{\rho 0 x}$ {\em or}  $P_{\rho x}^{\rI,\rII}=P_{\rho 1 x}$. In either case
$P_{\rho x}^{\rI,\rII}=|\phi_{\rho x}|=\tilde{p}_{\rho x}^{I,II}$. As this stochastic dynamics
happens independently and identically in the two Twin Worlds, we
have the desired Born-2 rule, $\dbtilde{p}_{\rho x}=|\phi_{\rho x}|^2/\sum_{x}|\phi_{\rho x}|^2$ in Our
World.   \\

Figure \ref{fig.1DSE} shows the result of a numerical simulation of
the 1D Schr\"odinger equation for a free particle on a regular lattice with periodic
boundary conditions.  The wave function is plotted for a given time as
function of position.  Deviations from exact 
agreement can be made arbitrarily small by reducing the time step
$d\ttil$ used in the numerics. The figure also shows that the standard
spreading of an initially localized wave package is reproduced
correctly.
\begin{figure}[h]
    \centering
  (a)  \includegraphics[width=0.4\textwidth]{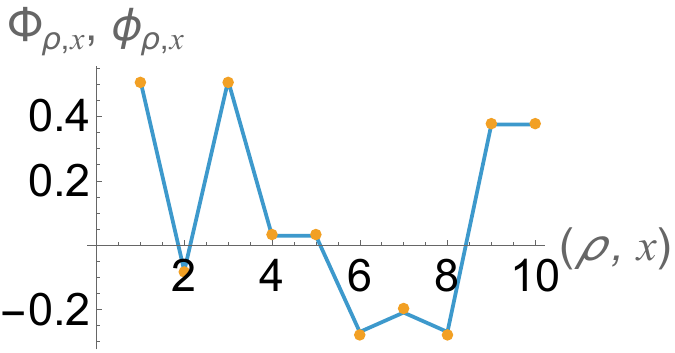}
  (b) \includegraphics[width=0.4\textwidth]{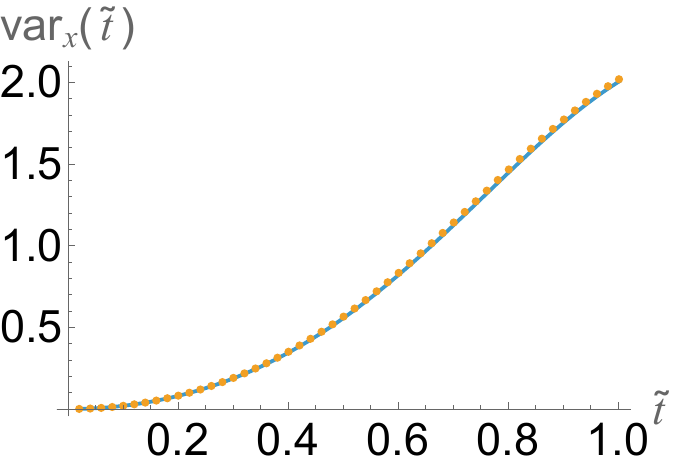}
    \caption{Simulation of the time evolution of the wave function of a
      free particle on a 1D regular lattice with $N=5$ by stepwise
      evaluation of the stochastic evolution
      eq.\eqref{eq:P''}. Real initial state localized on $x=2$.  (a) Position-dependent realified wave function
      $\phi_{\rho x}$ at $\tilde{t}=1$ (orange dots) from the
      stochastic emulation with 500 time-steps after normalization to 2-norm=1  compared
      to the 
      exact quantum mechanical realified wave function $\Phi_{\rho x}$ (blue continuous
      line, guide to the eye only --- the wave function is defined
      only on the lattice sites). (b) Variance of the
    wave package as function of time from the simulation
    (orange dots) compared to the exact quantum mechanical result (blue continuous line) 
    }
    \label{fig.1DSE}
\end{figure}

\subsection{Pure potential energy}
Quantum mechanical evolution over a time interval $t$ with only potential energy for a particle on the 1D line and $N$ discretized positions $x$, i.e.~a unitary $U=e^{-i V t/\hbar}$ leads to a realified orthonormal
\begin{equation}
  \label{eq:UtilV}
  \tilde{U}=\oplus_{x=1}^N\begin{pmatrix}
    \cos\varphi(x)& -\sin\varphi(x)\\
        \sin\varphi(x)& \cos\varphi(x)\\
    \end{pmatrix}
\end{equation}
in $(x,\rho)$ ordering, where $\varphi(x)=-\ttil\tilde{V}(x)$. Since
$V(x$) is diagonal in the position basis, we just get phase factors
for each position eigenstate, and a stochastic map emulating this
dynamics, including the amplitude reduction gates, can be 
can be found for arbitrary $\varphi(x)$.
Since $\tilde{V}$ can have positive and
negative parts, we decompose it with the help of the rectifier function,
$\text{rf}(x)\defas x \Theta(x)$, where $\Theta(x)$ is the Heaviside
function, according to 
\begin{eqnarray}
  \label{eq:Vtil}
  \tilde{V}&=&\rf(\tilde{V})-\rf(-\tilde{V})\\
    |\tilde{V}|&=&\rf(\tilde{V})+\rf(-\tilde{V})\,.
\end{eqnarray}
According to the recipe developed previously, it would be tempting  to distribute the positive entries of $\tilde{U}$ into entries of a stochastic matrix with indices $\sigma=0$, and the negative ones into those with indices $\sigma=1$.  However, a  non-constant potential $V(x)$ would then make that the 2-norms of  the columns of the emulated $\tilde{U}$ depend on $x$. If not corrected, this would lead to deviations
from standard quantum mechanics for the propagation of superpositions
of canonical basis states $\ket{e_{\rho_0x_0}}$,
\begin{equation}
  \label{eq:phi0}
\ket{\phi}=\sum_{\rho_0,x_0} c_{\rho_0\,x_0}\ket{e_{\rho_0\,x_0}}\,,  
\end{equation}
mapped via $S$ to $\ket{\phi'}=\sum_{\rho_0,x_0}
c_{\rho_0\,x_0}\ket{e_{\rho_0\,x_0}'}$.
One must therefore introduce a position-dependent amplitude reduction,
similar to the amplitude reduction gate developed in
\cite{braun_stochastic_2022} for the phase gate. We can do so by an
ansatz that once more exploits the gauge degree of freedom by writing a column with given $x$ of $S$ in
$(\rho,\sigma)$ ordering and skipping the zero entries as
$((1-r_x)\cos\varphi_x,r_x\cos\varphi_x,(1-r_x)\sin\varphi_x,r_x\sin\varphi_x)^t/(\sin\varphi_x+\cos\varphi_x)$
for $\pi/2>\varphi_x\ge 0$, and
$((1-r_x)\cos\varphi_x,r_x\cos\varphi_x,r_x\sin\varphi_x,(1-r_x)\sin\varphi_x)^t/(\sin\varphi_x+\cos\varphi_x)$
for $-\pi/2<\varphi_x< 0$, where $r_x$ with $1\ge r_x\ge 0$ is to be determined such that
the 2-norms of all columns are identical. 
One easily checks that if we make a column-vector interference-free,
i.e.~$r_x=0$, the 2-norm of the corresponding state $\phi_x'$ is given by
$1/\sqrt{1+\sin 2\varphi}$.  The smallest 2-norm 
arises hence from $\varphi$ closest to $\pi/4$.  Let us denote the corresponding $x$ as
$x_0$, i.e.~the
2-norms to which the 2-norms of all other $\phi'_x$ have to be
reduced by appropriate choice of the $r_x$ is $1/\sqrt{1+\sin 2\varphi_0}$.
For
finite $|\tilde{V}|$ and infinitesimal time $d\ttil$,  $\varphi(x)$ is
in $[0,\pi/2]$ for $V(x)<0$ and in $[-\pi/2,0]$ for $V(x)>0$. The smallest 2-norm 
arises then from  the largest absolute value $|V(x)|$.   

Altogether, with an
expansion to order $\ord(d\ttil)$, one finds that
the following stochastic matrix, written in $(\rho,\sigma, x)$
ordering, reproduces correctly the propagation of the realified
quantum state to the same order: 
\begin{eqnarray}
S_V&=&\mathbb{1}_{4N}+d\ttil
\begin{pmatrix}
-(\tV_0+\tV)/2&\rvline &(\tV_0-\tV)/2) &\rvline  &\tV_+&\rvline  & \tV_-\\
  \hline
  (\tV_0-\tV)/2) &\rvline &  -(\tV_0+\tV)/2&\rvline  &\tV_-&\rvline & \tV_+\\
  \hline
  \tV_-& \rvline & \tV_+& \rvline &  -(\tV_0+\tV)/2& \rvline & (\tV_0-\tV)/2)\\
  \hline
  \tV_+& \rvline & \tV_- & \rvline &(\tV_0-\tV)/2& \rvline &  -(\tV_0+\tV)/2
\end{pmatrix}\\
  &\asdef&\mathbb{1}_{4N}+G_V\,d\ttil\,,\label{eq:GV}
\end{eqnarray}
with $\tV_0\defas (\max_{x}|\tV_x|)\,\mathbb{1}_N$,
$\tV_+\defas \text{diag}(\text{rf}(\tV_x))$, $\tV_-\defas
\text{diag}(\text{rf}(-\tV_x))$, and $\tV=\text{diag}(|\tV_x|)$.

\subsection{Combination of kinetic and potential energy}\label{sec.comkipo}
To order $d\ttil$ the generators $G_T$ and $G_V$ simply add to give the full stochastic matrix,
\begin{equation}
  \label{eq:Sfull}
  S=\mathbb{1}_{4N}+d\ttil (G_T+ G_V)\,.
\end{equation}
One checks that \eqref{eq:Sfull} is indeed a bona-fide stochastic
matrix: all its entries are positive-semidefinite as long as $d\ttil$
is taken as infinitesimal and the scalar $\tV_0<\infty$, and the sum of all
entries in a column gives 1. To order $\ord(d\ttil)$, the
amplitude reduction implemented for the potential energy  also
equalizes the 2-norms of images of canonical basis states for the
combined map, with corrections of order $\ord(d\ttil^2)$.
Alternatively, one can define an amplitude reduction for both $G_T$
and $G_V$ combined that equalizes the
2-norms exactly, but since also the amplitude ratios resulting from
\eqref{eq:Sfull} are only correct to order $\ord(d\ttil)$, there
appears to be no need for this.  

Taken by itself, the stochastic dynamics generated by \eqref{eq:Sfull}
is not enough yet, however, to reproduce quantum mechanics: after each
infinitesimal time step, the obtained image $\phi'$ of the realified
quantum state $\phi$ will typically contain interferences that prevent
that the probabilities for the physical outcomes $x$ in each Twin
World are given by $\tilde{p}_x'=|\phi_x'|$. Hence, a refreshment
operation $\cal{R}$ must follow each infinitesimal time step that makes the
probability distribution $P'$ interference-free and enforces
$\tilde{p}_{x\rho}'=|\phi_{x\rho}'|$ for all $x,\rho$, leading to the chain of
equalities in  \eqref{eq:born2.5} to \eqref{eq:born2.1}. 
We may postulate that
in Nature refreshments are part of the propagation process. In the
present context, we can
define them formally as follows: 
\begin{definition}
Let $\cal{P}$ be the space of probability distributions of ppv's, i.e.~in the
present example of a single particle in 1D on $N$ sites the space of functions
$\mathbb{Z}_2^{2}\times \mathbb{Z}_N\to \mathbb{R}^+$.   A
refreshment is a map $\cal{R}:\cal{P}\to\cal{P}$, $P\mapsto {\cal
  R}(P)$ with
\begin{eqnarray}
  \label{eq:P''}
 ({\cal R(P)})_{\rho 0 x}&\defas& \left\{
  \begin{array}{cc}
    \phi_{\rho x} & \text{if }\phi_{\rho x}\ge 0\\
    0 & \text{if }\phi_{\rho x}< 0\\
  \end{array}
  \right.\nonumber \\
 ({\cal R(P)})_{\rho 1 x}&\defas& \left\{
  \begin{array}{cc}
    0 & \text{if }\phi_{\rho x}\ge 0\\
        -\phi_{\rho x} & \text{if }\phi_{\rho x}< 0
  \end{array}
\right.\,,
\end{eqnarray}
where $\phi_{\rho x}=\phi_{\rho x}(P)$ is defined in eq.\eqref{eq:phirx}.
\end{definition}
So the sign of $\phi_{\rho x}$ determines an index dynamics that we can also
write more compactly as 
\begin{eqnarray}
  \label{eq:P''2}
  ({\cal R}(P))_{\rho,\, (1-\sign{\phi_{\rho x}})/2,\, x}&=&
                                                             |{\phi_{\rho
                                                             x}}|
                                                             \nonumber
  \\
  ({\cal R}(P))_{\rho,\, (1+\sign{\phi_{\rho x}})/2,\, x}&=& 0\,,
\end{eqnarray}  
with commas inserted in the triple indices for better
distinguishability.

The full propagation over an infinitesimal
time-step $d\tilde{t}$ is then given by the fundamental equation
\begin{equation}
  \label{eq:propFull}
  P(\tilde{t}+d\tilde{t})={\cal R}(P(\tilde{t})+d\ttil\, G P(\tilde{t}))\,,
\end{equation}
with $G=G_T+G_V$. The form of the non-linearity (notably the fact that
$|\phi|$  is not differentiable at $\phi=0$),
makes that \eqref{eq:propFull} cannot be simply rewritten, by
expanding in $d\tilde{t}$, 
as a differential equation with a time-derivative on the left. 
Remarkably, while being non-linear due to the
refreshment, by construction and up to the different normalization, it
exactly reproduces the time evolution of the quantum mechanical
wave function $\Phi(x,t)$ propagated via the linear Schr\"odinger
equation, in the sense that the wave function $\phi(x,t)$ emulated via
eq.~\eqref{eq:phirx} from $P(x,t)$ obeys
$\phi(x,t)=\Phi(x,t)/||\Phi(x,t)||_1$ for all $x,t$. As explained
above, the different
normalization compared to QM cancels in the post-selected coincidence probabilties from the two
Twin Worlds such that the physical probabilities in Our
World are given exactly by the quantum mechanical ones,
$\dbtilde{p}(x,t)=|\Psi(x,t)|^2$. 

In FIG.\ref{fig:1Dtun} I show that \eqref{eq:propFull} correctly reproduces
another hall mark of quantum mechanics, namely the tunneling effect.
A wave package arriving from the left on a 1D line impinges on a
rectangular potential barrier. It is mostly reflected for the parameters
chosen, but a small part penetrates the barrier and arrives on the other
side. 
\begin{figure}[h]
    \centering
 (a)    \includegraphics[width=0.31\textwidth]{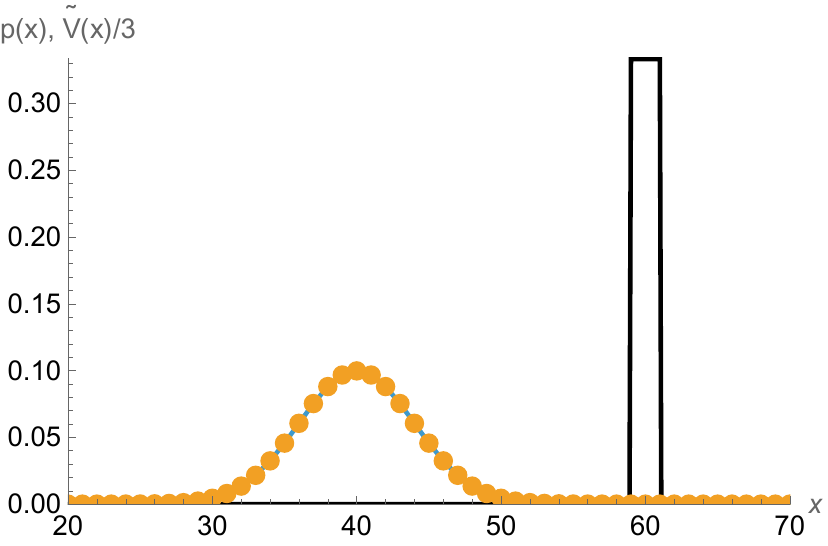}
     \includegraphics[width=0.31\textwidth]{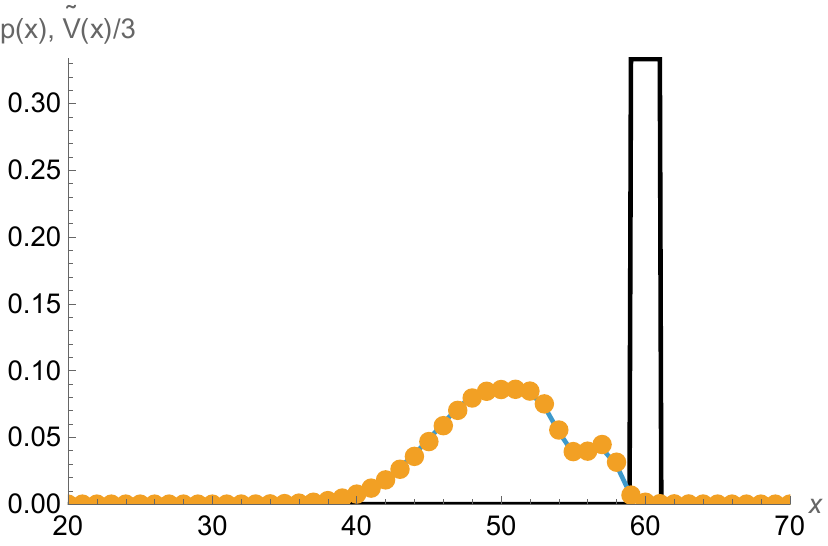}
     \includegraphics[width=0.31\textwidth]{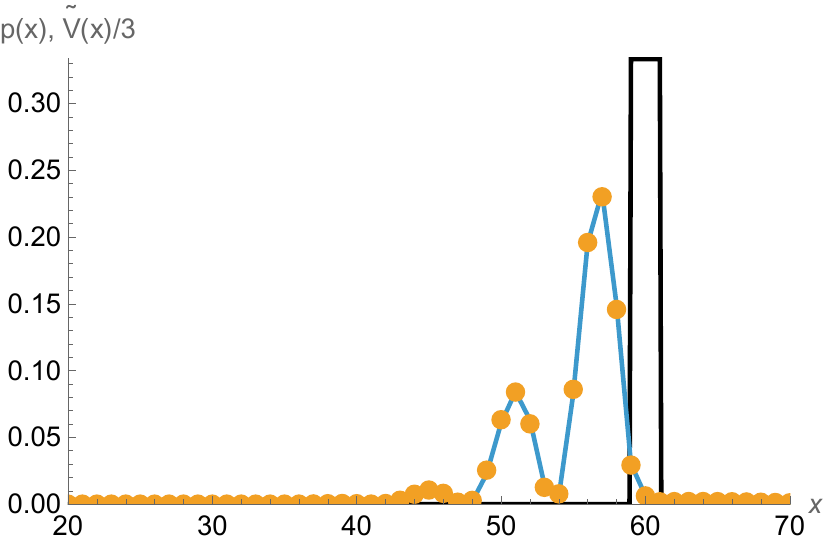}\\
     \includegraphics[width=0.31\textwidth]{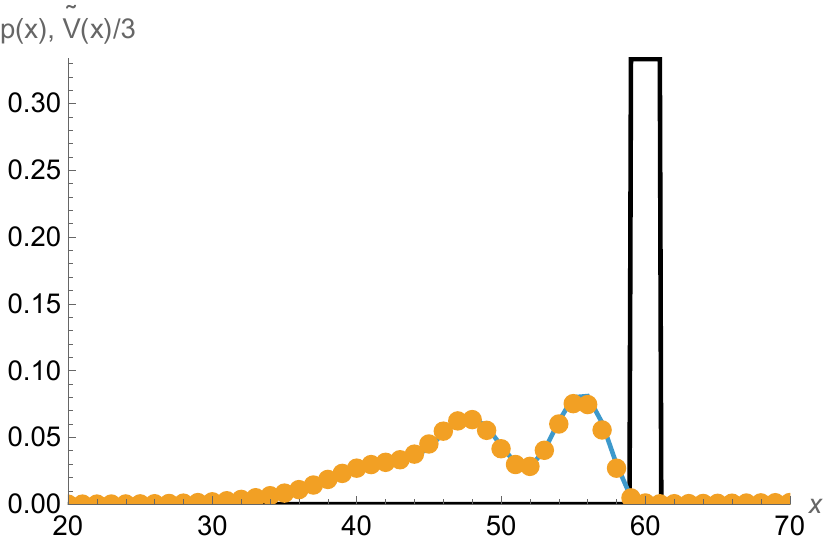}
     \includegraphics[width=0.31\textwidth]{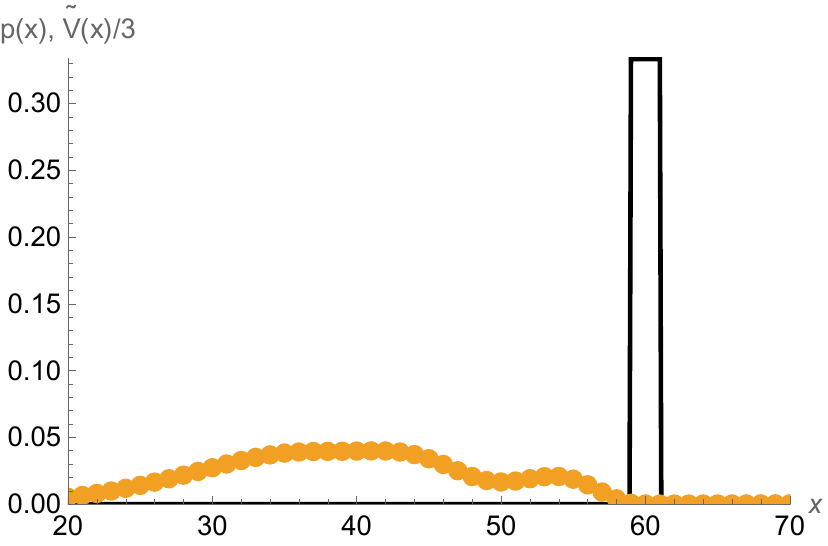}\\
     (b)     \includegraphics[width=0.31\textwidth]{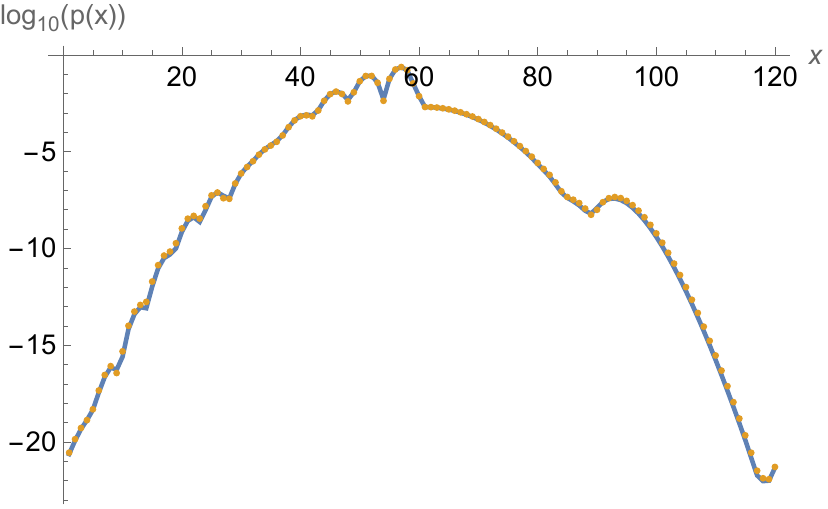}
     (c)     \includegraphics[width=0.31\textwidth]{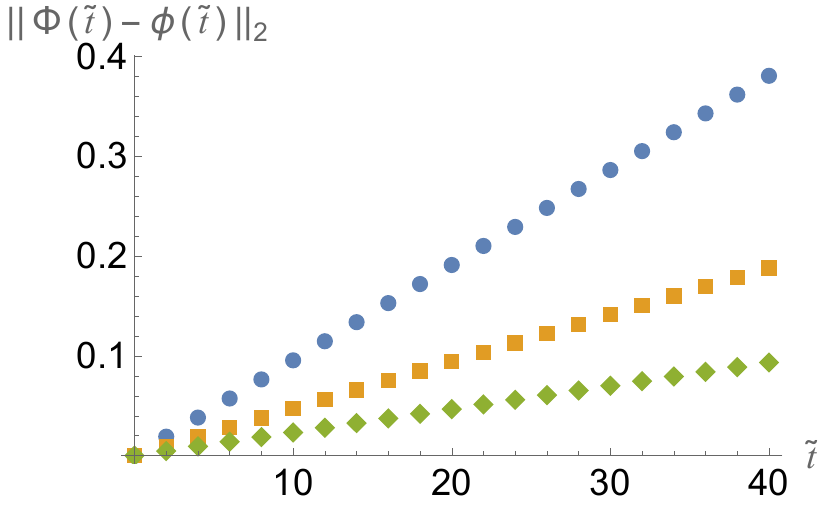}
     \caption{Twin World simulation of the quantum tunneling effect
       based on the stochastic evolution equation \eqref{eq:propFull}. The barrier is given by $\tV(x)=1$ for $59\le x \le 61$ and
      zero else (continuous black line).  For compatibility of scale, $\tV(x)/3$ is plotted.
      (a) Five snapshots of the time evolution of the
      probability density $p(x)$ at times
      $\tilde{t}=0$,       $\tilde{t}=10$,       $\tilde{t}=20$, $\tilde{t}=30$, and
      $\tilde{t}=\ttil_\text{max}=40$. Blue continuous lines represent the exact quantum mechanical
      result, dots the simulation using \eqref{eq:Sfull} in
      each Twin World with periodic boundary conditions, with a time step 
      $d\ttil=\ttil_\text{max}/(N_t-1)$ with $N_t=8001$.  The initial wave package is
      $\Psi(x,0)=\braket{x|\Psi(0)}=e^{-\frac{(x-x_0)^2}{4 \sigma_x^2}}e^{i 2 \pi  k
        x/N}/\cal{N}$ with normalization $\cal{N}$ such that
      $||\ket{\Psi(0)}||_2=1$ with $x_0=40$, $k=10$, $\sigma_x=4$, and
      $N=120$.   (b)
The two probability distributions at $\ttil=10$ and with $N_t=16001$
plotted logarithmically, $\log_{10} p(x)$, show numerical agreement of the Twin World emulation with standard QM on the level of $10^{-20}$.  
(c)      2-norm of the difference between 
      exact quantum mechanical evolution and the stochastic
      emulation, $||\Phi-\phi||_2$, as function of time $\ttil$ for different
      values of $N_t$: $N_t=4001$ (blue circles), $N_t=8001$ (red squares), $N_t=16001$ (green diamonds). The 2-norm increases linearly with time, but 
      the increase can be made arbitrarily small by choosing
      sufficiently small time steps.}  
    \label{fig:1Dtun}
\end{figure}
Besides allowing the chain of equalities \eqref{eq:born2.5} to
\eqref{eq:born2.1} that allows the Twin World interpretation of
quantum mechanics, the refreshments after each time step also have the
practical consequence of even allowing the stochastic emulation of the
quantum dynamics for larger times.  Without the refreshments, the
amplitudes of the realified images $\phi'$ decay exponentially 
with the number of interference-generating steps.  This leads to a
phase-transition like behavior, where all of a sudden the emulation
shows very large deviations from the true quantum state. This is
expected at the latest when
the amplitude has decayed to numerical machine precision.  \\

\subsection{Higher dimensions and multiparticle dynamics} 
It is straight forward to extend the previous results to $M\ge 1$ particles in
any other integer number $D$ of dimensions. Position eigenstates are 
then the usual joint position eigenstates, i.e.~tensor product states
$\ket{{\bm x}_1,\ldots,{\bm x}_M}$ with ${\bm x}_i\in\mathbb{R}^{D}$
  in the continuous case, or ${\bm x}_i\in\mathbb{Z}_{N^{D}}$ for a
  lattice of $N$ discrete positions in all $D$ directions, and
  $j=1,\ldots, M$.
The total hamiltonian is now
  \begin{equation}
    \label{eq:Htot}
    H=\sum_{j=1}^MT_j+\sum_{j=1}^MV_j(\bm x_j)+U({\bm x}_1,\ldots,{\bm x}_M),,
  \end{equation}
where $U$ contains all particle interactions, whereas $V_j$ is
an externally applied potential for particle $j$, as
before.  In the case of  two-body interactions, $U({\bm x}_1,\ldots,{\bm
  x}_M)=\sum_{j\ne k}U({\bm x}_j-{\bm x}_k)$, but as long as $U$ is
diagonal in the position eigenbasis, everything goes through for
arbitrary interactions, including multi-body interactions. 
For the kinetic energy, we simply have to replace the
discrete Laplace operator in 1D by the corresponding one in $D$
dimensions for $M$ particles,
$\Delta=\sum_{i=1}^M\Delta_i=\sum_{i=1}^M\sum_{j=1}^D{\partial^2}/{\partial
x_{i,j}^2}$, where $\bm x_i=(x_{i,1},\ldots,x_{i,D})$.  One checks that
because of the additivity of the Laplacian over all dimensions and particles
$G_T$ simply changes to 
\begin{eqnarray}
  \label{eq:GTDM}
G_T&=&
\begin{pmatrix}
  -4 D\,M\,\mathbb{1}_{N^{D\,M}}  &\rvline &\mathbb{0}_{N^{D\,M}} &\rvline  &2D\,M\,\mathbb{1}_{N^{D\,M}}&\rvline  & \sum_{i=1}^D\sum_{j=1}^MO_1^{(i,j)}\\
  \hline
  \mathbb{0}_{N^{D\,M}}  &\rvline &  -4D\,M\,\mathbb{1}_{N^{D\,M}} &\rvline  &\sum_{i=1}^D\sum_{j=1}^MO_1^{(i,j)}&\rvline & 2D\,M\,\mathbb{1}_{N^{D\,M}}\\
  \hline
  \sum_{i=1}^D\sum_{j=1}^MO_1^{(i,j)}& \rvline & 2D\,M\,\mathbb{1}_{N^{D\,M}}& \rvline &  -4D\,M\,\mathbb{1}_{N^{D\,M}}& \rvline & \mathbb{0}_{N^{D\,M}} \\
  \hline
  2D\,M\,\mathbb{1}_{N^{D\,M}}& \rvline & \sum_{i=1}^D\sum_{j=1}^MO_1^{(i,j)} & \rvline &\mathbb{0}_{N^{D\,M}}& \rvline &   -4D\,M\,\mathbb{1}_{N^{D\,M}}
\end{pmatrix}\,,
\end{eqnarray}
where 
\begin{eqnarray}
  \label{eq:O1ij}
  O_1^{(i,j)}\defas \mathbb{1}_{N^{D\,(j-1)+i-1}}\otimes O_1 \otimes \mathbb{1}_{N^{D(M-j)+D-i}}\,.
\end{eqnarray}
Again, no amplitude reduction is needed for pure kinetic energy.

For the potential energy, the interactions between the
particles can be
treated exactly as the external potential.  Indeed, it is convenient to
combine both in a total potential energy, $W=\sum_j V_j + U$.
Amplitude reduction works then as before, since QM just gives a diagonal
matrix of phase factors. In
$G_V$ (see eq.\eqref{eq:GV}) we can then simply replace $V\mapsto W$ everywhere, and hence
get a total generator $G=G_T+G_W$.  The maximization over $x$ in the
definition of $\tilde{V}_0$ gets replaced by a maximization over all
$({\bm x}_1,\ldots,{\bm x}_M)$, i.e.~$\tilde{W}_0\defas
(\text{max}_{{\bm x}_1,\ldots,{\bm x}_M}|W({\bm x}_1,\ldots,{\bm
  x}_M)|)\mathbb{1}_{N^{D\,M}}$.  With still keeping just two bits for
sign and ReIm, all stochastic matrices acting on the full
joint-probability distributions of all particle positions, and in
particular the identity operator in $S$ for $\ttil=0$,  have
now dimension $4N^{D M}$. 

\section{Discussion}\label{sec.discus}
\subsection{Particle number fluctuations}
The introduction of a Twin World and the additional dimension
associated to it begs the question why we should not see   
particles move in 
and out of Our World, and correspondingly have the number of particles
in Our World fluctuate? This concern motivated the introduction of
curled-up dimension in previous theories with extra dimensions, which,
however, is clearly not an option in the present theory as explained
in Section \ref{sec.TwiWor}.   
A first answer would be that for $N$ particles in Our World, the
fluctuations that we 
see are expected to be of order $\sqrt{N}$, i.e.~on relative order
$1/\sqrt{N}$. For a number of particles roughly estimated to $N\sim
10^{80}$ in the visible universe, they would  have gone entirely  
unnoticed. But presumably in a small subvolume with
much smaller number of particles, relative fluctuations of the number
of particles in that volume would be much larger. If we trapped a
single ion in a trap, say, would we not randomly see it being there or
not?  Experiments with neutral trapped atoms  achieve now the control
of the number of particles down to single particle-resolution in
ensembles up to more than thousand atoms
\cite{PhysRevLett.111.253001}.  In optical tweezer arrays, single
atoms trapped in dipole traps can be kept for thousands of seconds and
continuously imaged with fluorescence imaging with good theoretical
understanding of the remaining losses, and without needing explanation
of disapparance into higher dimensions
\cite{PhysRevApplied.16.034013}. A theory that predicts the
possibility of particles to disappear into higher dimensions would
hence be in contradiction to experimental evidence.

I now show that the Twin World picture of quantum reality
avoids this problem by the bias of reality in Our World made up of
post-selected coincidence events.  Consider for this purpose a
drastically simplified model of a measurement process that tries to
find out whether a particle is still somewhere in Our
World.  The key observation is that for a measurement process to take
place, one needs at least two particles: the particle $\cal P$ with
coordinate $\bm x $ to be
measured, and at least one more particle $\cal M$, for the measurement
apparatus, with position $\bm y$. In reality, of course, a measurement
apparatus or even a quantum probe that serves as intermediate-scale
device from the quantum world to the macroscopic world, would normally
be much more complex than a single particle.  But clearly, at least one
more particle is needed and it turns out to be enough for the
argument.  Let us consider a very crude measurement of the
where-abouts of the particle: we only distinguish between the particle
being in a region $\cal A$ (e.g.~the region inside the trap), and the
rest of the world $\cal B=\cal U\setminus \cal A$, where $\cal U$ is the
entire accessible universe.  Measuring whether the particle is still
anywhere in $\cal U$ is then equivalent  to assessing if it is in $\cal
A$ or in $\cal B$, with only three logical possible outcomes,
including that it is found neither in $\cal A$ nor in $\cal B$.  Now, if we do that measurement in Our World, by
definition the measurement apparatus (or detector) $\cal M$ is in Our World,
i.e.~$\bm y^{\rI,\rII}\in \cal U$. In fact, since $\bm y\in \cal A$ or $\bm y \in \cal
B$, we have necessarily that either $(\bm y^{\rI}\in {\cal A}) \land  (\bm y^{\rII}\in \cal
A)$ or $(\bm y^{\rI}\in {\cal B}) \land  (\bm y^{\rII}\in \cal B)$. For the sake of
the argument it is enough to assume that $\cal M$ be
localized, i.e.~we can assume that we know that $(\bm y^{\rI}\in {\cal A}) \land  (\bm y^{\rII}\in \cal
A)$. In order to measure anything, we clearly need an interaction between $\cal P$
and $\cal M$, and since we have two identical Twin Worlds, the interaction
must happen independently in each Twin World.  We can code the
different responses of $\cal M$ depending on whether $\bm x,\bm y$
are in $\cal A$ or $\cal B$ by different values $V^{AA}$, $V^{AB}$,
$V^{BA}$, or $V^{BB}$ of the interaction potential.  We need not model
the response of the detector to these different interactions, but could
imagine e.g.~that particle $\cal M$ moves to a different position
depending on whether the interaction is $V^{AA}$ or $V^{BA}$.  In any
case, different interactions are the 
first step in a hierarchy of information processing leading to
different outcomes of a more realistic detector depending on the relative positions
of $\cal P$ and $\cal M$.   More precisely,
in Twin World I, we have e.g.~$V^{AB,I}=V(\bm x^{\rI}\in {\cal A} \land
\bm y^{\rI}\in \cal B)$ and in   Twin World II, we have
e.g.~$V^{AB,II}=V(\bm x^{\rII}\in {\cal A} \land
\bm y^{\rII}\in \cal B)$. If $\cal M$ is to be of any use in a position
measurement of $\cal P$, we also need that $V^{AA}\ne V^{BA}$.  The
functional dependence of $V$ on its arguments 
must be the same in both Twin Worlds.  Now under the premises that we only see coincidence
of detector responses in Our World, it follows from $y^{\rI},y^{\rII}\in
\cal A$ that either $\bm x^{\rI},\bm x^{\rII}\in \cal A$ as well or
$\bm x^{\rI},\bm x^{\rII}\in \cal B$, whereas e.g.~$\bm x^{\rI}\in \cal A$ and
$\bm x^{\rII}\in \cal B$ would lead to two different detector responses in
the two Twin Worlds and hence is never observed in Our World.  In
other words, the assumed localization of the detector $\cal M$ in Our
World together with the assumption that only coincidence events in Our
World are observed leads to a post-selection bias that makes 
higher dimensions (i.e.~$ \bm x^{\rI}\ne \bm x^{\rII}$) is never observed. \\

More generally, if we have a statistical theory that reproduces
exactly the statistics of standard
quantum theory for all possible events, the questions asked in the framework of the new theory
should also be asked in standard quantum theory --- and the question whether
a particle (or the Moon) is there when nobody looks goes all the way
back to the beginnings of quantum mechanics, without satisfactory
answer still today.  In the Twin World picture we get the answer that
by construction we (or the measurement apparatus) do not even
measure the particle when it is not in Our World, so all measurement
records consist by definition only of events in which the particle is
present in Our World.

\subsection{Reflections on non-locality}\label{sec.nonloc}
The agreement of the Twin World picture of the quantum world with standard QM
implies that it is not possible to rewrite it as a local realistic
theory as modeled by Bell and CHSH. The Twin World picture is explicitly
realistic, in the sense that all variables have a well-defined value
at all times, which is different from the standard interpretations of
QM, where those values are miraculously created during the measurement
process and only then come into reality.  In contrast, in the Twin
World picture, we are just ignorant of those microscopic values and
hence describe them with probability distributions ---
the picture that Einstein kept advocating, but which was
abandoned based on Bell's definition of a local realistic theory and
overwhelming experimental evidence that his model does not describe Nature.  

If one tried to describe the Twin World picture with Bell's model of a
local realistic theory (see e.g.~\cite{PhysRevA.78.032114} for a modern formulation of the
Bell polytope), one would conclude that it cannot be local, as
it is explicitly realistic. This would be rather disturbing, as in
Bell's model it would imply signaling between 
possibly space-like separated players Alice and Bob.  
However, the statistical model of the Twin World picture is
different from Bell's.  While the latter appears to be perfectly
natural and tends to go unquestioned, we recognize that it is clearly
not the only possible statistical model. The possibility of
post-selection is not considered, nor the possibility of a refreshment
mechanism taking place in two Twin Worlds in parallel.  Independently
of the statistical model, one should ask therefore: 
Can Alice's measurement settings influence the probabilities of
outcomes on Bob's side?  As by construction the Twin World model
reproduces exactly the standard quantum statistics, and it is
generally accepted that quantum
statistics peacefully co-exists with special relativity
\cite{shimony1984controllable,Shimony1986-SHIEAP} (see, however,
\cite{peacock_peaceful_1991}), the answer is 
``No!'', on the same level as it is ``No!'' for standard quantum
mechanics. The non-local collapse of the wave function, considered as
problematic if the wave function is an element of reality, seizes to
be problematic as {\em in fine} it is just the collapse of a probability
distribution in either of the Twin Worlds. \\

Does the refresh mechanism lead to additional non-locality? At the
example of the CHSH-test, one can try to express the refreshments as
another stochastic map and examine whether it can be written as a
tensor product of two local stochastic maps.  If so, we can permit
that it depends locally on the measurement settings of Alice and Bob without
the need of information transfer between them.  This is analyzed in Appendix
\ref{app.refr}. The numerical results imply that the refreshments can
just marginally not be written
as local stochastic maps, but for a definite answer more precise methods or even proofs
are called for.  If confirmed, in this sense a new type of
non-locality arises.  However, if it exists, it acts in each of
the two Twin Worlds, i.e.~in two spaces different from our usual
configuration space, where we don't have 
experimental guidance on what is possible.  Non-Locality there does
evidently not lead to  non-local action or faster-than-light
information transfer in Our World, due to the agreement with the
statistical predictions of standard quantum mechanics in Our World and
the peaceful co-existence of standard quantum mechanics and special
relativity mentioned.

\subsection{Reduction of the quantum measurement problem to a
  classical one}
Under ``measurement problem'' one commonly understands the fact that
according to standard quantum mechanics, in a measurement the wave function collapses
to a random, single component of a superposition of eigenstates of an
observable represented by a hermitian operator.  The surviving
component is the  projection of the state immediately before
the collapse onto the eigenspace corresponding to the found
measurement result in von Neumann measurements, or corresponding to a
realized POVM element in the case of POVM (=positive operator-valued
measure) measurements.  This is problematic for several reasons: i.) It requires a cut
between the microsopic world and the macroscopic world with its
measurement instruments that is somewhat arbitrary; ii.) it leaves
unclear what exactly constitutes a measurement, and hence when the wave
function is supposed to collapse as opposed to evolving continuously
according to the Schr\"odinger equation \cite{bell_against_1990};
iii.) the collapse is non-local.

Arguably, issue i.) is not fully resolved by the Twin  World picture.  As seen
above, we assume that we as macroscopic observers live in Our World and can only
observe coincidence events there, whereas the dynamics of quantum particles
takes place in parallel in two Twin Worlds. It is not clear whether
this constitutes a cut between macrocsopic and microscopic world, or
whether also we observers venture into two parallel Twin Worlds, but
also only observe our own coincidences in Our World and hence don't
realize such vagabonding. In the same
context one might also object that coincidences of very many
particles become extra-ordinarily unlikely. However, as long as they
are non-zero and elementary events that constitute a coincidence are
well defined, this may not be a problem: The flow of time in Our World
will ultimately have to be defined by events in Our World, such that
long time-intervals of no coincidences drop out.

However, issues ii.) and iii.) are reduced simply to their classical
counterparts.  And compared to QM, the non-locality of the
collapse of a probability distribution
is comparatively trivial and and does not give rise to worries about ``spooky action
at a distance'' \cite{einstein_born-einstein_1971} (p.158):  
If the wavefunction just represents some observer's 
information about the system, the collapse is on the same
level as the update of a probability distribution upon sampling from
it.  It represents statistical information --- or lack
thereof, i.e.~our expectation what will happen if we repeat the
experment many times, rather than the state of a single system --- and
its update upon smapling is 
unproblematic.  If, however, one postulates that the quantum state is all we can know about a 
single system and attributes physical reality to the wave function and
its collapse, the latter becomes deeply problematic for the mentioned
reasons.

Einstein remained convinced that quantum mechanics is
incomplete and only a statistical theory on top of a realistic
microscopic world, where physical observables have well-defined
values, revealed in experiment \cite{Einstein35}.  That a local realistic theory is
unviable has become mainstream credo in physics after the derivation
of Bell's inequality and generalizations thereof based on the
assumption of local realism and overwhelming experimental evidence
that these inequalities are violated.  This creates substantial difficulties,
however. Besides the measurement
problem, the different ontologies in quantum mechanics (with quantum
states in Hilbert space or,
equivalently, in the Heisenberg picture, hermitian operators as basic dynamical quantities) and general
relativity (with worldlines of particles in a dynamical space-time) are
one of the main obstacles in unifying quantum mechanics and general
relativity.

The Twin World picture, however, makes clear that Bell's model
of a local realistic theory is, while plausible, not the most general
one.   Post-selection of coincidence events in Our World, and
refreshment mechanisms that align probability distributions in each
Twin World with absolute values of the quantum mechanical wave
function are not accounted for.  These ingredients allow for a theory that has
e.g.~particle positions in each Twin World as realistic variables, and
yet 
is local in the sense of no non-local information transfer in Our
World between Alice and Bob based on the probability distribution of
the realistic variables.  It reproduces exactly the statistical
predictions of quantum mechanics and hence, of course, violates
Bell inequalities exactly as quantum mechanics does. In this sense it
must be non-local because it is realistic.  But the
non-locality of the 
collapse of the wave function is of no concern, just as the non-local collapse
of the probability distribution of classically correlated particles
upon measuring one of them is of no concern.

The refreshment mechanism introduces a new type of possible non-locality in the sense explained
in Section \ref{sec.nonloc}. However, the non-locality happens in
each Twin World as part of the dynamics, as far as we can tell from
the limited evidence of simulating the CHSH experiment, and the
believed necessity of locality of physical influences in Our World does not rule out
non-locality of refreshments in each Twin World.  Theoretical attempts
to account for the non-locality of quantum mechanics and the
propagation of quantum information, such as in
Bohmian mechanics, or coupling
to tachyonic fields \cite{feinberg_possibility_1967}, have led previously to speculations
of propagation of quantum information in another space, and experiments
that lower bounded the apparent speed of propagation of quantum
information to several orders of magnitude larger than the speed of
light in vacuum \cite{scarani_speed_2000,salart_testing_2008}. \\

To summarize, with the Twin World picture of reality in the quantum world, the
measurement process is at least 
reduced to the same questions in classical statistics.  E.g.~we describe the
diffusive Brownian process of a particle with a continuously evolving
probability distribution of its position.  We use a probability
distribution because of our ignorance of the actual position of the
particle until we measure it --- at which point our ignorance is
resolved and correspondingly the probability distribution collapses
``instantaneously'' (but really on the time scale on which we acquire the
information) to a delta-peak on the found position.  Here it is
assumed that we can separate the world into observed particle and
observer, and this is not considered problematic just because the
particle has a well-defined position at all times, whether we measure
it or not.  The same reasoning applies to the Twin World picture,
where the particle has well-defined coordinates in each Twin World
that undergo a stochastic jump process and only come to existence in
Our World upon coincidence in the two Twin Worlds. If there they are
revealed by measurement to an experimentalist, the probability
distribution with which she describes the particle collapses
``instantaneously'', whereas someone else without that information might still hang on to the
continuously evolving dynamics of the probability distribution.

\subsection{Random permutations as fundamental building blocks of
  quantum dynamics}
Due to the block structure \eqref{eq:Sblock} of the stochastic matrix
$S$ in \eqref{eq:Sfull}, $S$ is a bistochastic
matrix.  Hence, by Birkhoff's theorem
\cite{birkhoff_three_1946}, $S$ can be represented as a convex combination of
permutations.
The refreshments are based on the estimate $\hat{\phi}$ from relative 
frequencies, see eq.\eqref{eq:hatpsiN}. Also these refreshments can be realized as permutations,
albeit permutations that depend through the state estimate on the
actual configurations realized in the ensemble. Hence, with propagation and
refreshments taken together, we can
understand quantum dynamics as arising fundamentally from random
permutations. This adds another twist to the proposal by Penrose that
``Perhaps the most basic ingredients of nature are not things, but
combinatorial relations --- the way entities can be permuted or
interchanged.'', even though he appears to have quantum processes at
the Planck scale in mind \cite{Penrose89} (chapter 8-9). 

\subsection{Challenges of the continuum limit}
Physical theories, both quantum and classical, traditionally assign
continuously varying real numbers to the positions of particles, and
the Laplace operator with its second derivatives with respect to
positions in Schr\"odinger's equation requires these continuously
varying numbers.  In reality, of course, this is an
idealization.  No particle can ever be localized with infinite
precision, as this would require an infinite amount of energy.  Even
fundamental constants such as the speed of light, can, as matter of
principle, not be defined with an infinite number of digits
\cite{braun_intrinsic_2017}, and it has been argued that at the latest at
the Planck scale space becomes somehow fuzzy \cite{Ford00} and geometry possibly
non-commutative
\cite{Snyder1947,DFR1994,DFR1995,Connes1994,ChamseddineConnes1997}. Recently,
it was pointed out that less significant 
digits in the numerical value of an observable might play the role of
hidden variables \cite{Gisin2019Indeterminism,Gisin2020RealNumbers} in
classical mechanics and be truely random as far as concerns physics.
Nevertheless, in non-relativistic QM the assumption
of continuously varying positions works well, and one should therefore consider
whether this is also possible in the Twin World picture.  One
encounters several problems, however.  Firstly, in the limit
of vanishing lattice constant $a$ (set to 1 so far), the negative and
positive parts of the Laplacian, $\mathbb{1}/a^2$ and $O_1/a^2$ (see
eq.\eqref{eq:DelO1}) diverge, whereas their sum remains a
well-defined differential operator. This might not be too problematic,
however, as the two parts always act together in the generator
$G_T$. Secondly, we made the assumption of finite maximum potential
energy, i.e.~$\text{max}_{x}|\tilde{V}(x)|<\infty$. In the continuum
limit, however, a Coulomb potential, say, does diverge for distance
between the charges approaching 0. For standard QM this is not
problematic.  E.g.~the energy eigenstates and eigenvalues of the
hydrogen atom can be obtained to very good approximation without
taking into account the finite size of the proton.  But for the amplitude reductions built into $G_V$ it
is. In reality, distances between particles can certainly not go to
exactly 0, already for the reason that otherwise the position of one
particle relative to the other would be specified again with infinite precision.  When 
reaching very small distances and correspondingly large 
energies, standard QM needs to be replaced by a relativistic field
theory, and therefore a similar extension of the Twin World picture is
expected to be necessary if $a$ becomes very small.  But thirdly,
infinitely small $a$ is problematic because it leads to vanishing
joint probabilities.  Indeed, with $a\to 0$, probabilities in Our
World would be of measure zero in the Full World.  Also from this
perspective one expects therefore a minimal length scale $a$ as
necessary for the theory to work.  Since, on the other hand, we have
taken time as continuous, this will create certainly some tension when
trying to make the theory relativistic.  An obvious way out would
be to also discretize time, but this is beyond the scope of this work. 
In the same context, it is to be expected that coincidences need only
happen with a certain tolerance, $x^{I}\in \{x^{II}-\delta
x/2,x^{II}+\delta x/2\}$, in order to be registered in Our World,
because exact coincidence would again lead to probabilities of
measure zero in the Full World.  All
of this hints at the existence of a fundamental small length scale
$\delta x$, which needs to be microscopically small and relevant for
defining fundamental events in the sense of probability theory. 

\subsection{Relationship to other theories}
\subsubsection{Time-symmetric formulation of quantum mechanics}
The idea of post-selection was explored previously in the context of the time-symmetric formulation
of quantum mechanics, where state preparation and
measurement are treated on an equal footing \cite{aharonov_time_1964}.
This has led to the formalism of weak values \cite{aharonov_how_1988}, which in turn
was later shown to select position as the hidden variable in Bohmian
mechanics if one defines velocity via a weak value
\cite{wiseman_grounding_2007}.  There appears to be some similarity
between the Twin Worlds and the forward- and backward propagated
branch in the time-symmetric formulation.  In the latter, however,
the two wave functions are not constructed from differences of
classical probability distributions, and there 
appears to be the need of an experimentalist who does the
post-selection.  With the post-selection per coincidence from the two
Twin Worlds on the other hand, there is a natural post-selection
mechanism without the need of intervention of an
experimentalist or backward propagation.

\subsubsection{Nelson's approach}
Schr\"odinger's equation was derived by Nelson from a stochastic process
\cite{nelson_derivation_1966}, based on  the ansatz
$\Psi=e^{R+ iS}$.  Also there, the particle
position functions as realistic variable that undergoes a
Brownian-like stochastic process, making the theory a
non-deterministic, realistic theory.  
It was criticized on several accounts, such as requiring a uniform
background field that introduces the noise responsible for the
stochastic process, which in addition should be highly correlated in
situations where entanglement is generated. Nelson himself saw satisfying locality as
problematic in his approach \cite{nelson_quantum_1985}. 
Relativistic generalizations  were nevertheless 
attempted over the years, see
e.g.~\cite{PhysRevD.31.2521,PhysRevD.27.2912,PhysRevD.32.1375,pavon_stochastic_2001,fritsche_stochastic_2009}. 
Another criticism, raised by Wallstrom
\cite{wallstrom_derivation_1989}, is that the stochastic formulation
of Schr\"odinger's equation and Schr\"odinger's equation itself are only
fully equivalent if the single-valuedness of $e^{iS}$ is assumed, for
which he did not see a natural reason. One
might debate the justification of this criticism, as single-valuedness
of the wave-function in addition to the evolution according to
Schr\"odinger's equation also has to be assumed in standard QM for
deriving, e.g., the correct quantization of angular
momenta.  In any
case, however, there appears to be no such ambiguity in the Twin World
formalism, as the basic underlying quantity is a semi-positive definite probability distribution,
and the (realified) wave-function arises as its derivative which hence
has also unique sign, leading to a unique, single-valued complexified wave-function
at any point in space based on it. 
Future work will have to show whether a relativistic extension is possible. \\
On a more general note, recent developments in the construction of
a ``postquantum theory of classical gravity'', where gravity is kept
classical but subject to a sufficient amount of noise \cite{PhysRevX.13.041040}, might renew the
attractiveness of stochastic quantum mechanics and
stochastic quantum field theory \cite{guerra_structural_1981} by
providing a natural unifying framework based on noisy classical
dynamics for both quantum mechanics and gravity.

\acknowledgements{I thank Jakub Czartowski for a critical reading of
  the manuscript, and Anagha Shriharsha, Robert Tumulka, and Thibault Wachtelaer for discussions.}

\section{Appendix: non-locality of refreshments}\label{app.refr}
Borrowing from the nomenclature of entanglement theory, let us say that a
stochastic map $S$ acting on a probability distribution 
of random variable of a bipartite system is local if it can be
written as a convex combination of tensor products of local stochastic
maps $S^A_i$, $S^B_i$,
\begin{equation}
  \label{eq:Sloc}
  S=\sum_{i=1}^k p_i S_i^A\otimes S^B_i\,,
\end{equation}
with $\sum_i p_i=1$, and 
$k$ can be arbitrarily large \footnote{I do not
  call it separable, as separable stochastic maps are already defined
  differently in the mathematical literature, see \cite{enwiki:1319001334}.}.   If $k=1$, we say that the
stochastic map is a (tensor) product map. We can restrict $0< p_i\le
1$, as terms with $p_i=0$ do not contribute in eq.\eqref{eq:Sloc}.  
One might wonder whether not all stochastic maps on bipartite systems
can be  written in that way, but the example of a probabilistic 
swap gate on two bits shows that this is not the case:
\begin{lemma}
Let
\begin{equation}
  \label{eq:Sswap}
  S_\text{swap}=
  \begin{pmatrix}
    1& 0 & & 0 \\
    0& 1-p & p & 0\\
    0& p & 1-p& 0\\
    0& 0 & 0 & 1
  \end{pmatrix}
\end{equation}
Then it is not possible to write $S_\text{swap}$ in the form \eqref{eq:Sloc}.
\end{lemma}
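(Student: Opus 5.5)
The plan is to use a support (zero-pattern) argument. A convex combination of entrywise nonnegative matrices with strictly positive weights can only vanish where every summand vanishes, and where it equals $1$ in a stochastic column, every summand must equal $1$ there too. I read the blank entry in \eqref{eq:Sswap} as $0$ and assume $0<p\le 1$. For $p=0$ the matrix $S_\text{swap}$ is the identity, which is trivially a product map, so the lemma must mean $p>0$.

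Suppose, for contradiction, that $S_\text{swap}=\sum_{i=1}^k p_i S_i^A\otimes S_i^B$ with all $p_i>0$. I order the joint index as $(a,b)\mapsto 2a+b$, so that $(S_i^A\otimes S_i^B)_{(a'b'),(ab)}=(S_i^A)_{a'a}(S_i^B)_{b'b}$. Each column of $S_i^A\otimes S_i^B$ is a probability vector, because $S_i^A$ and $S_i^B$ are stochastic.

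First I look at the column with input $(a,b)=(0,0)$. It is the deterministic vector $e_{00}$. A convex combination of probability vectors with strictly positive weights equals a point mass only if every term equals that point mass. Hence $(S_i^A)_{00}(S_i^B)_{00}=1$ for all $i$. Since both factors lie in $[0,1]$, this gives $(S_i^A)_{00}=(S_i^B)_{00}=1$, and by column normalization $(S_i^A)_{10}=(S_i^B)_{10}=0$.

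The same argument on the column with input $(1,1)$, which is $e_{11}$, gives $(S_i^A)_{11}=(S_i^B)_{11}=1$. Therefore every $S_i^A$ and every $S_i^B$ is the $2\times2$ identity. It follows that $\sum_i p_i S_i^A\otimes S_i^B=\mathbb{1}_4$. This contradicts the entry $(S_\text{swap})_{(10),(01)}=p>0$.

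No step here is a real obstacle. The only delicate point is the first one: justifying that a deterministic column forces each factor to be deterministic. This uses the positivity of the weights $p_i$ (terms with $p_i=0$ were already excluded) and the fact that the factors $(S_i^A)_{00}$ and $(S_i^B)_{00}$ lie in $[0,1]$. I will state this explicitly. I will also note that the argument never uses the middle block beyond the single entry $p$, so it works for every $0<p\le 1$.
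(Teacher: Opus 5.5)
Your proof is correct and is essentially the paper's argument. Both rest on the fact that if an entry of a strictly positive convex combination of products of numbers in $[0,1]$ equals $1$ (or $0$), then every term must take that value. The paper reads this off the first row, obtaining $(S_i^A)_{01}=0$ for all $i$ and then contradicting the $(01),(10)$ entry; you use the two deterministic columns $(00)$ and $(11)$ to force every local factor to be $\mathbb{1}_2$.

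Your explicit restriction to $0<p\le 1$ is also the right one. The paper's proof tacitly needs $p>0$ too: it writes $1$ where the $(01),(10)$ entry of $S_\text{swap}$ is actually $p$, and for $p=0$ the matrix is $\mathbb{1}_4$, which is a product map.
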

\begin{proof}
The proof is straight forward, based on making an ansatz of the form \eqref{eq:Sloc},
and then showing that
it leads to an algebraic contradiction.  Indeed, define
\begin{equation}
  \label{eq:SA}
  S_{i}^A=
  \begin{pmatrix}
    a_0^{(i)}& a_1^{(i)}\\
    1-a_0^{(i)}&1-a_1^{(i)}
  \end{pmatrix},\,\,\,
    S_{i}^B=
  \begin{pmatrix}
    b_0^{(i)}& b_1^{(i)}\\
    1-b_0^{(i)}&1-b_1^{(i)}
  \end{pmatrix}
\end{equation}
All free variables $a_0^{(i)},\ldots, b_1^{(i)}$ satisfy $0\le a_0^{(i)}\le 1,\ldots,0\le b_1^{(i)}\le 1$.  
The first row of eq.\eqref{eq:Sloc} with $S=S_\text{swap}$ leads to the constraints
\begin{eqnarray}
  \sum_i p_i a_0^{(i)} b_0^{(i)}&=& 1\label{eq:constr8.2}\\
  \sum_i p_i a_0^{(i)} b_1^{(i)}&=& 0\label{eq:constr8.3}\\
  \sum_i p_i a_1^{(i)} b_0^{(i)}&=& 0\label{eq:constr8.4}\\
  \sum_i p_i a_1^{(i)} b_1^{(i)}&=& 0\label{eq:constr8.5}\,.
\end{eqnarray}
From \eqref{eq:constr8.2} we conclude that $a_0^{(i)}\ne 0$ and
$b_0^{(i)}\ne 0$ for all $i$, because if for some $i=j$ one of them
vanished, the remaining terms in \eqref{eq:constr8.2} could not sum to
1 anymore: $\sum_{i\ne j}p_ia_0^{(i)} b_0^{(i)}\le \sum_{i\ne
  j}p_i<1$. From  \eqref{eq:constr8.4}
we then get $a_1^{(i)}=0$ for all $i$. But on
the other hand, from the 01-10 element of $S_\text{Swap}$, we need
$1=\sum_{i}p_ia_1^{(i)}(1- b_0^{(i)})$, but the r.h.s. vanishes,
a direct algebraic contradiction.
\end{proof}
Next consider the refreshments needed for the four different settings
of the CHSH experiment, see FIG.~\ref{fig:CHSHgrabit}. In each of the two Twin
Worlds, they must
make the probability distribution after the last two $S_H$ gates
interference-free.  With four grabits involved in each Twin World, $S$
is a 256$\times$256 dimensional stochastic matrix. All $S^A_i$ and
$S^B_i$ are 16$\times$16 dimensional stochastic matrices with 240 free
positive variables, $0\le (S^A_{i})_{kl}\le 1$,   $0\le (S^B_{i})_{kl}\le 1$
with normalization of each column to 1-norm 1.\\

We limit ourselves to $k=1$.  It corresponds to the case
where neither information transfer nor non-local action of Alice and
Bob in each Twin World are needed. Satisfying \eqref{eq:Sloc} with $k>1$ would need
information transfer but would not require non-local action in the Twin
Worlds, whereas for refreshments that do not satisfy \eqref{eq:Sloc},
refreshments act non-locally in the Twin Worlds. We also note that
reducing refreshments to stochastic maps is a simplification that
cannot work in general, as refreshments are intrinsically non-linear,
i.e.~depend on the state themselves.  Here we can reformulate them as
stochastic map, as we know the input probability distribution $P'$ and
output probability distribution $P''$, or
equivalently, the sequence of gates that were applied starting from a
known input state.

Eq.\eqref{eq:Sloc} for $k=1$ leads then to a system $P''=(S^A_1\otimes
S^B_1 )P'$ of
$256$ quadratic equations for 480 variables, to be solved
under the constraints that both $S^A_1\asdef S^A$ and $S^B_1\asdef S^B$ are stochastic
maps. Given it directly to Mathematica, it returns an empty set of
solutions. In order to corroborate this, one can simplify the problem by realizing
that both $P'$ and $P''$ contain a lot of zeros.  Those in $P''$
imply that the entire corresponding row in $S$ must be equal to zero
with the possible exception of entries that multiply zeros in
$P'$. Those in $P'$ make that an entire corresponding column of $S$
does not contribute, such that with the exception of one entry equal 1
for respecting normalization, we can put all entries in the column to zero.  More formally,  
Define
${\cal N'}\defas \{(k,l)|P'_{(k,l)}=0\}$, ${\cal N''}\defas
\{(i,j)|P''_{(i,j)}=0\}$.   Then do the following:
\begin{enumerate}
\item Define local matrices 16$\times$16 $S^A$, $S^B$ with matrix
  elements $a_{ik}$, $b_{jl}$, respectively.   
\item Set $S_{ij,kl}=a_{ik}b_{jl}$ for all $(i,j),(k,l)$.
\item Set $S_{ij,kl}=0$ for all $(i,j)\in \mathcal{N}^{\prime\prime}$.
\item Set $S_{ij,kl}=\delta_{ij,kl}$ for all $(k,l)\in \mathcal{N}^{\prime}$. This
  may override some 0's from the previous step, but it does not
  matter, as the corresponding columns do not contribute.  So this
  step is just for assuring the correct normalization in columns
  otherwise identical to zero.  
\item In order to ensure normalization of $S$, replace all entries in the last non-zero
  row in a column with 1 minus the sum of the entries in the same column,
  i.e.~for all $(k,l)$ and with $J_\text{max}\defas
  \text{argmax}_{(i,j)\notin \mathcal{N}^{\prime\prime}}$
  \begin{equation}
    \label{eq:Sijren}
    S_{J_\text{max},kl}\mapsto 1-\sum_{(i,j)=1}^{J_\text{max}-1}S_{ij,kl}\,.
  \end{equation}
\item From the resulting $S$, collect all remaining variables
  $a_{ik}$, $b_{jl}$ and set all other variables in $S^A$ and $S^B$ to 0. 
\item In order to enforce the local constraints
  $\sum_{i=1}^{15}a_{ik}=1$,    $\sum_{j=1}^{15}b_{jl}=1$, replace the
  last non-zero elements  in $S^A$, $S^B$ by 1 minus the sum of the
  remaining elements above in each column, and remove those last
  non-zero elements from the list of free variables. For the first setting $(\theta_1,\theta_2)=(\pi/2,\pi/4)$ in the Bell experiment, this leaves 96 independent variables, 16 from $S^A$ and 80 from $S^B$.  
\item  Add the
  positivity constraint $a_{ik}, b_{jl}\ge 0$ for all remaining
  $a_{ik}, b_{jl}$, including the once replaced in step 7.
\end{enumerate}
Mathematica's (v.13.2) ``NSolve'' answers ``The solution set
contains a full-dimensional component; use Reduce for complete
solution information'', but then returns an empty set. With
``Reduce'', the calculation did not terminate even after several days
of calculation.

One can also try to solve the system by formulating it as a minization
problem of the
quartic form and sum-of-squares polynomial $Q=\sum_{I=1}^{256}(P''_I-(S^A_1\otimes S^B_1P')_I)^2$
over the remaining free and contributing variables under the mentioned
constraints.  Here, for the first
settings in the CHSH experiment, the minimum of $Q$  found by
``NMinimize'', is $Q\simeq 0.0107$ i.e.~it almost vanishes, but not quite.  Hence, it appears that 
the refreshments necessary in each Twin World for the grabit-probability distributions to be
interference-free at the end of the CHSH grabit circuit for the first
setting cannot be written as a tensor product of two local stochastic
maps, but further investigations would be needed to rigorously prove
and possibly generalize this statement to values $k>1$ (i.e.~local
rather than factoring stochastic maps) or other settings.

\bibliography{C:/Users/danyb/Documents/mypapers/bibfile_master/mybibs_bt}

\end{document}